\newtheorem{lemma}{Lemma}
\newtheorem{theorem}[lemma]{Theorem}
\newtheorem{coro}[lemma]{Corollary}
\newtheorem{definition}[lemma]{Definition}
\newtheorem{rmk}[lemma]{Remark}
\newtheorem{property}[lemma]{Property}
\def\BibTeX{{\rm B\kern-.05em{\sc i\kern-.025em b}\kern-.08em
    T\kern-.1667em\lower.7ex\hbox{E}\kern-.125emX}}
\begin{document}
\title{A Constant-Approximation Algorithm for Budgeted Sweep Coverage with Mobile Sensors}
\author{Wei~Liang,~\IEEEmembership{}
	Shaojie~Tang,~\IEEEmembership{Member,~IEEE,}
	and~Zhao~Zhang,~\IEEEmembership{}
	\IEEEcompsocitemizethanks{\IEEEcompsocthanksitem W. Liang is with School of Mathematical Sciences, Zhejiang Normal University, Jinhua, Zhejiang, 321004, People's Republic of China.\protect\\
		E-mail: lvecho1019@zjnu.edu.cn
		\IEEEcompsocthanksitem S. Tang is with Naveen Jindal School of Management, University of Texas at Dallas, Richardson, Texas 75080, USA.\protect\\
		E-mail: shaojie.tang@utdallas.edu
		\IEEEcompsocthanksitem Z. Zhang is the corresponding author, with School of Mathematical Sciences, Zhejiang Normal University, Jinhua, Zhejiang, 321004, People's Republic of China.
		This research is supported by National Natural Science Foundation of China (U20A2068).\protect\\
		E-mail: hxhzz@sina.com

	}
}

\markboth{Journal of \LaTeX\ Class Files,~Vol.~18, No.~9, September~2020}%
{How to Use the IEEEtran \LaTeX \ Templates}

\maketitle

\begin{abstract}
	In this paper, we present the first constant-approximation algorithm for {\em budgeted sweep coverage problem} (BSC). The BSC involves designing routes for a number of mobile sensors (a.k.a. robots) to periodically collect information as much as possible from points of interest (PoIs). To approach this problem, we propose to first examine the {\em multi-orienteering problem} (MOP). The MOP aims to find a set of $m$ vertex-disjoint paths that cover as many vertices as possible while adhering to a budget constraint $B$. We develop a constant-approximation algorithm for MOP and utilize it to achieve a constant-approximation for BSC. Our findings open new possibilities for optimizing mobile sensor deployments and related combinatorial optimization tasks.
\end{abstract}

\begin{IEEEkeywords}
	Sweep Cover, Approximation Algorithm, Budgeted Cover, Orienteering.
\end{IEEEkeywords}

\section{Introduction}

%
%
%
%
\IEEEPARstart{T}{he} (sensor) coverage problem has applications in various domains, including environmental monitoring, surveillance, intrusion detection, precision agriculture, and healthcare \cite{AakashTON,XuTON,GuoTMC,Shen-IoTJ}. While much existing research focuses on deploying static sensors to provide continuous coverage for points of interest (PoIs) \cite{ran2021,Huang-2021,zhang2022-area,Fan2014,YangTON}, this approach often requires substantial resources. In many real-world applications, such as police patrolling and wildlife conservation monitoring, periodic coverage of PoIs is sufficient. To achieve periodic coverage, we need to design trajectories for mobile sensors (a.k.a. robots) that guide them to visit PoIs periodically. A PoI $v$ is considered {\em sweep-covered} if it is visited at least once within every time period $t_v$, where $t_v$ is referred to as the {\em sweep-period} of $v$. Depending on the scenario, the optimization objective can be either sensor-oriented, aiming to minimize the number \cite{Li-2011,Liu-2021,WangTCS} or speed \cite{zhao-2012,GU-2006} of mobile sensors such that sufficient PoIs are sweep-covered, or target-oriented, aiming to minimize the maximum sweep-period \cite{Gao2018,Gao2022} or maximize the number of sweep-covered PoIs using limited resources \cite{Diyan2021,Liang-charge}.

The study of the sweep coverage problem was initially undertaken by \cite{Chen2008}. Their goal was to use the minimum number of mobile sensors to achieve sweep coverage for {\em all} PoIs (MinSSC). They proved that MinSSC cannot be approximated within a factor of 2 unless $P=NP$. Subsequently, \cite{Gorain2015} proposed a 3-approximation algorithm for MinSSC assuming uniform sweep-period and uniform speed.

In many real-world applications, it may not be possible to allocate enough sensors to meet the coverage requirements of each PoI. Thus, the research focus shifted towards maximizing coverage efficiency with limited resources, leading to the investigation of the {\em budgeted sweep coverage problem} (BSC). In BSC, the objective is to sweep-cover as many PoIs as possible with a limited budget of mobile sensors. Furthermore, when every PoI is assigned a weight, the objective is to maximize the total weight of sweep-covered PoIs, which is referred to as the {\em weighted budgeted sweep coverage} (WBSC) problem.



Reference \cite{Huang2018} marked a pioneering effort in the study of WBSC. Assuming that each mobile sensor can choose only one route from a limited number of alternative routes, the authors provided a $(1-\frac 1e)$-approximation algorithm. Note that when presented with a polynomial number of alternate routes for each mobile sensor, the problem is reduced to a maximum set cover problem. However, in general, the number of routes can be exponential, and multiple mobile sensors may collaborate on a single route. \cite{Diyan2021} explored the WBSC problem on a line (WBSC-L), where all PoIs are located on a line. They introduced three approximation algorithms tailored to various mobile sensor speed scenarios. In the case where all mobile sensors operate at the same speed (a common assumption in many existing studies), \cite{Liang2023} developed a \( (4,1/2) \)-bicriteria algorithm for BSC. I.e., their algorithm achieves an approximation ratio of \( 1/2 \) but compromises feasibility by a factor of 4. This raises the question we aim to tackle in this paper: Is there a feasible constant-factor approximation algorithm for the general BSC?



\subsection{Related Work}
Starting from \cite{Chen2008}, the sweep coverage problem has been extensively investigated. In the seminal paper \cite{Chen2008}, it was assumed that the optimal sweep-route corresponds to a shortest Hamiltonian cycle. Based on such an assumption, they reduced the problem to the {\em traveling salesman problem} (TSP), proved that achieving a better than 2-approximation for MinSSC is unlikely unless $P=NP$, and designed a 2-approximation algorithm for MinSSC.
Later, \cite{Gorain2015} discovered that grouping the PoIs can significantly reduce the number of required sensors. In their algorithm, PoIs are divided into groups and each group of PoIs are jointly sweep-covered by a same set of sensors, which prevents wastage of sensors caused by traveling between distant PoIs. They proposed a 3-approximation algorithm using a {\em minimum spanning forest} to group the PoIs efficiently.

The budgeted version of sweep coverage (BSC) was proposed by \cite{Huang2018}. They proposed a $(1-\frac{1}{e})$-approximation algorithm for WBSC, assuming each mobile sensor is associated with a finite set of alternative routes. In \cite{Nie-budget}, two heuristic algorithms were presented for WBSC. \cite{Diyan2021} studied the WBSC problem on a line (WBSC-L), considering three mobile sensor speed conditions: ($\romannumeral1$) uniform speed, ($\romannumeral2$) constant number of different speeds, and ($\romannumeral3$) moderate number of different speeds. They designed an exact algorithm using dynamic programming for case ($\romannumeral1$), which was extended to a $\frac{1}{2}$-approximation algorithm for case ($\romannumeral2$), and a $(\frac{1}{2}-\frac{1}{2e})$-approximation algorithm for case ($\romannumeral3$) using linear programming and randomization.

\cite{Liang2021} introduced the {\em prize-collecting minimum sensor sweep coverage problem} (PCMinSSC). As BSC, the PCMinSSC does not require all PoIs to be sweep-covered. At the same time penalties should be paid for uncovered PoIs. The goal of PCMinSSC is to minimize the cost of sensors along with the penalties incurred by uncovered PoIs. The authors devised a 5-approximation algorithm for a specific version called {\em prize-collecting minimum sensor sweep coverage with base stations} (PCMinSSC$_{BS}$), which assumes that each mobile sensor has to be linked to a base station and the number of base stations is a constant. Recently \cite{Liang2023a} relaxed the base station assumption and presented a 5-approximation algorithm for PCMinSSC. Their algorithm is based on a 2-LMP algorithm for a new combinatorial optimization problem called {\em prize-collecting forest with $k$ components problem} (PCF$_k$). PCF$_k$ aims to find a forest of exactly $k$-components such that the cost of the forest plus the penalties on those vertices not spanned by the forest is as small as possible.

Our approach to BSC relies on the solution of a new combinatorial optimization problem called the {\em multi-orienteering problem}  (MOP). It is a generalization of the {\em un-rooted orienteering problem} (UOP). Given a metric graph and a budget $B$, the goal of UOP is to find a simple path containing the maximum number of vertices and costing no more than $B$. The UOP can be solved using an algorithm for the {\em rooted orienteering problem} (ROP), where a starting vertex $s$ is pre-given and the path should begin at $s$.  \cite{blum2007} presented a $4$-approximation algorithm for ROP.
When points are in a fixed-dimensional Euclidean space, a 2-approximation algorithm was proposed in \cite{Arkin-1998}, which was later improved to a PTAS by  \cite{Chen-o-2008}. \cite{Bansal-2004} considered a more general orienteering problem, the {\em point-to-point orienteering problem} (P2P-OP), where the path is required to end at a specified vertex $t$, in addition to starting at a given vertex $s$. The authors designed a $3$-approximation algorithm for P2P-OP, which was later improved to $2+\varepsilon$ by \cite{Chekuri2012}.
\cite{Xu-MO} addressed a {\em generalized team orienteering problem}, aiming to find multiple $s$-$t$ paths, each costing at most $B$, such that the total number of spanned vertices is maximized. By utilizing the $(2+\varepsilon)$-approximation algorithm for P2P-OP, they obtained a $(1-(1/e)^{\frac{1}{2+\varepsilon}})$-approximation.
For more variants of the orienteering problem, refer to the surveys \cite{Vansteenwegen2011,Gunawan2016}.

Our result for the MOP is based on the {\em minimum weight vertex-disjoint $m$-paths spanning at least $k$ vertices problem} ($k$-MinWP$_m$). The $k$-MinWP$_m$ asks for $m$ vertex-disjoint paths spanning at least $k$ vertices such that their total length is as small as possible. When $m=1$, the $k$-MinWP$_1$ can be viewed as an {\em un-rooted $k$-stroll problem} ($k$-SP). The {\em point-to-point version of the $k$-stroll problem} (P2P-$k$-SP) is to find a minimum length $s$-$t$ path that visits at least $k$ vertices in a given metric graph $G$, where $s$ and $t$ are pre-specified vertices. There are many studies on the P2P-$k$-SP \cite{bateni2013,Chaudhuripath,Chekuri2012,Nagarajan2007}. Our $k$-MinWP$_m$ generalizes the $k$-SP by requiring more paths (the number of paths $m$ might not be a constant).


\subsection{Our Contribution}
In this paper, we address the open problem of whether a constant-approximation exists for the BSC \cite{Liang2023}, achieving an approximation ratio of $\left(0.0116-O(\varepsilon)\right)$. 

Unlike past research on various sweep coverage challenges that typically employ vertex-disjoint {\em trees} for grouping PoIs, our strategy utilizes vertex-disjoint {\em paths} rather than trees. By allowing sensors to operate independently along these paths (instead of operating collaboratively on cycles as in the past researches), we attain the first constant approximation without compromising feasibility.

In order to solve BSC, we propose the MOP and present a $(0.035-O(\varepsilon))$-approximation for MOP. The key step is a bicriteria result for the $k$-MinWP$_m$, a new problem proposed in this paper. Inspired by \cite{Blum1996}, we address $k$-MinWP$_m$ using its associated ``prize-collecting'' problem, the {\em prize-collecting vertex-disjoint multi-paths problem} (PCP$_m$).


In our work on MOP, the main challenge lies in managing a given budget as the upper bound shared across all paths, with no specified starting and ending vertices. When the number of paths is not a constant, the combinations of starting and ending vertices increase exponentially. Additionally, an overall budget constraint cannot be easily decomposed into individual budget constraints, as it is unclear how much budget should be allocated to each path, and guessing individual budgets may take exponential time. A similar ongoing challenge exists for $k$-MinWP$_m$, where the complexity also arises from unspecified starting and ending vertices.

Our paper builds the algorithm step by step (see Figure \ref{fig:1}), overcoming challenges posed by non-constant number of paths and the absence of pre-given roots. All aforementioned results might be of independent interest and beneficial for network design and multiple vehicles routing problems. Overall, our results provide significant advancements in solving the BSC and open up new avenues of research in related problems.

\begin{figure} [h]
\centering
\tikzstyle{format}=[rectangle,draw,thin,fill=white]
\tikzstyle{test}=[diamond,aspect=2,draw,thin]
\tikzstyle{point}=[coordinate,on grid,]
\begin{tikzpicture}
	\node[format] (start){2-LMP for PCF$_m$};
	\node[format,below of=start,node distance=12mm] (define){4-LMP for PCP$_m$};
	\node[format,below of=define,node distance=12mm] (PCFinit){Bicriteria algorithm for $k$-MinWP$_m$};
	\node[format,below of=PCFinit,node distance=12mm] (DS18init){$(0.035-O(\epsilon))$-approximation for MOP};
	\node[format,below of=DS18init,node distance=12mm] (LCDinit){$(0.0116-O(\epsilon))$-approximation for BSC};
	
	\draw[->] (start)--(define);
	\draw[->] (define)--(PCFinit);
	\draw[->](PCFinit)--(DS18init);
	\draw[->](DS18init)--(LCDinit);
\end{tikzpicture}
\caption{The outline of finding a solution to BSC.}
\label{fig:1}
\end{figure}
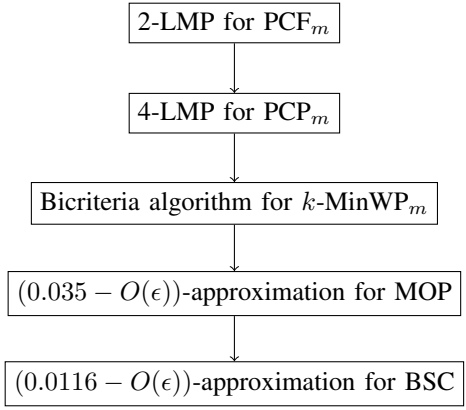

\subsection{Organization}

The rest of this paper is organized as follows: The problem is formally defined in Section 2, together with some preliminary results. The constant-approximation for MOP is presented in Section 3, based on which the approximation algorithm for BSC is given in Section 4. Section 5 concludes the paper.

\section{Problem formulation and preliminaries}
This section formally defines the budgeted sweep coverage problem and the multi-orienteering problem, together with some related terminologies.

\begin{definition}[sweep coverage]
{\em Let $G=(V,E,w)$ be a graph on vertex set $V$, edge set $E$ and edge weight function $w:E\mapsto \mathbb R^+$ which is a metric on $G$. Assume that mobile sensors can move along the edges of $G$ at the same speed $a$. For a positive real number $t$ called {\em sweep period}, a vertex $v$ is said to be {\em sweep-covered} if $v$ is visited by the mobile sensors at least once in every time period $t$.}
\end{definition}


\begin{definition}[budgeted sweep coverage (BSC)]\label{pro524-01}
{\rm Given a metric graph $G=(V,E,w)$ and a positive integer $N$, the goal of BSC is to design trajectories for $N$ mobile sensors such that the number of sweep-covered vertices is maximized.}
\end{definition}

For a real number $0<\gamma \leq 1$, a polynomial time algorithm for a maximization problem is considered a {\em $\gamma$-approximation algorithm} if for any instance $I$ of the problem, the output of the algorithm has value at least $\gamma\cdot opt(I)$, where $opt(I)$ is the optimal value of the instance. While for a minimization problem and a real number $\beta\geq 1$, a polynomial time algorithm is said to be a {\em $\beta$-approximation algorithm} if the output solution has value at most $\beta\cdot opt(I)$.

As a step stone for solving BSC, we propose the multi-orienteering problem as follows.

\begin{definition}[multi-orienteering problem (MOP)]\label{pro624-1}
{\rm Given a metric graph $G=(V,E,w)$, a positive integers $m$ and a positive real number $B$, the goal of MOP is to find a set $\mathcal P_m$ of $m$ vertex-disjoint paths in $G$ spanning the maximum number of vertices such that $w(\mathcal P_m)=\sum_{P\in \mathcal P_m}w(P)\leq B$, where $w(P)=\sum_{e\in E(P)}w(e)$.}
\end{definition}

We solve MOP by considering the following problem.

\begin{definition}[minimum weight vertex-disjoint multi-paths spanning at least $k$ vertices ($k$-MinWP$_m$)]\label{pro55-1}
{\rm Given a metric graph $G=(V,E,w)$ and two integers $m,k$ with $m\leq k\leq |V|$, the goal of $k$-MinWP$_m$ is to find a set $\mathcal P_m$ of $m$ vertex-disjoint paths spanning at least $k$ vertices such that the weight $w(\mathcal P_m)=\sum_{P\in \mathcal P_m}w(P)$ is minimized.}
\end{definition}

For $k$-MinWP$_m$, we propose a bicriteria approximation algorithm that allows for limited violations of constraints while ensuring a provable approximation guarantee. For real numbers $\beta\geq 1$ and $0<\alpha\leq 1$, an algorithm for $k$-MinWP$_m$ is said to be an {\em $(\alpha,\beta)$-bicriteria approximation} if for any $k$-MinWP$_m$ instance $I$, it can always compute in polynomial time a solution $\mathcal P_m$ satisfying
$$| V(\mathcal P_m)|\geq \alpha \cdot k \ \mbox{and}\ w(\mathcal P_m)\leq \beta\cdot opt(I),$$
where $opt(I)$ is the optimal value of instance $I$.


The bicriteria algorithm for $k$-MinWP$_m$ is based on an approximation algorithm for a prize-collecting version of the vertex-disjoint multi-path problem, which is defined as follows.

\begin{definition}[prize-collecting vertex-disjoint multi-paths (PCP$_m$)]\label{pro526-01}
{\rm Given a metric graph $G=(V,E,w)$ and a penalty function on vertex set $\pi: V\mapsto \mathbb{R}^+$, the goal of PCP$_m$ is to find a set $\mathcal P_m$ of $m$ vertex-disjoint paths such that the weight of $\mathcal P_m$ plus the penalty of vertices not spanned by $\mathcal P_m$ is minimized, that is, $\min\{w(\mathcal P_m)+\pi(V\setminus V(\mathcal P_m))\}$, where $\pi(V\setminus V(\mathcal P_m))=\sum_{v\notin V(\mathcal P_m)}\pi(v)$. An instance of PCP$_m$ is written as $(G,w,\pi)$.}
\end{definition}

The approximation algorithm for PCP$_m$ is built upon an approximation algorithm for a prize-collecting forest problem, which is defined as follows.

\begin{definition}[prize-collecting forest with $m$ components (PCF$_m$)]\label{pro113-01}
{\rm PCF$_m$ is similar to PCP$_m$ except that the goal is to find a forest containing exactly $m$ components, that is, $m$ vertex disjoint trees instead of $m$ vertex-disjoint paths.}
\end{definition}

We obtain  $r$-LMP algorithms for both PCP$_m$ and PCF$_m$, where the definition of $r$-LMP is given as follows.

\begin{definition}[Lagrangian multiplier preserving algorithm with factor $r$ ($r$-LMP)]
{\rm For a real number $r\geq 1$, an algorithm for PCP$_m$ (resp. PCF$_m$)  is said to be an $r$-LMP if for any instance $I$ of PCP$_m$ (resp. PCF$_m$), the algorithm can output a set $\mathcal{T}$ of $m$ paths (resp. trees)  in polynomial time such that
	$$w(\mathcal T)+r\cdot\pi(V\setminus V(\mathcal{T}))\leq r\cdot opt(I).$$}
\end{definition}


In this paper, we utilize a well-known short-cutting strategy to derive a path $P$ from a tree. This method, which is used to obtain a 2-approximate solution for the traveling salesman problem \cite{Vazirani}, involves doubling every edge in the tree to form an Eulerian graph. By traversing a closed walk that includes every edge exactly once and shortcutting repeated vertices, we obtain a cycle. Removing any edge from this cycle results in a path $P$ that covers each vertex of $T$ exactly once. When dealing with a metric underlying graph, the triangle inequality ensures that $w(P) \leq 2w(T)$.

\section{A Constant Approximation Algorithm for MOP}\label{sec-003}
In this section, we first give a bicriteria algorithm for $k$-MinWP$_m$, based on which a constant-approximation algorithm is obtained for MOP.

\subsection{A bicriteria algorithm for $k$-MinWP$_m$}\label{subsec-001}

The algorithm for $k$-MinWP$_m$ calls a 4-LMP algorithm for the PCP$_m$ problem as
a subroutine, which makes use of a 2-LMP algorithm for PCF$_m$. A detailed design of this 4-LMP algorithm is shown in the proof of the following theorem.

\begin{theorem}\label{theo610-1}
{\rm PCP$_m$ admits a $4$-LMP (denoted as $\mathcal A_{PC}$) and $\mathcal A_{PC}$ executes in time $O(|V|^4)$.}
\end{theorem}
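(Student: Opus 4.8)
The plan is to reduce PCP$_m$ to PCF$_m$ by the standard tree-to-path short-cutting trick, paying a factor $2$ in the edge cost, and to control the penalty blow-up so that the combined factor stays at $4$. Concretely, I would first invoke the assumed $2$-LMP algorithm for PCF$_m$ (the box "2-LMP for PCF$_m$" in Figure~\ref{fig:1}) on the same instance $(G,w,\pi)$ to obtain a forest $\mathcal F$ with exactly $m$ components $T_1,\dots,T_m$ satisfying $w(\mathcal F)+2\cdot\pi(V\setminus V(\mathcal F))\le 2\cdot opt_{\mathrm{PCF}}(I)$. Then I would turn each tree $T_i$ into a path $P_i$ on the same vertex set $V(T_i)$ using the doubling/Eulerian-walk/short-cutting procedure described in the excerpt, which gives $w(P_i)\le 2w(T_i)$ by the triangle inequality, hence $w(\mathcal P_m)\le 2w(\mathcal F)$; crucially $V(\mathcal P_m)=V(\mathcal F)$, so the penalty term is unchanged: $\pi(V\setminus V(\mathcal P_m))=\pi(V\setminus V(\mathcal F))$.

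Next I would relate $opt_{\mathrm{PCF}}(I)$ to $opt_{\mathrm{PCP}}(I)=:opt(I)$. Any feasible solution to PCP$_m$ is a collection of $m$ vertex-disjoint paths, and a path is in particular a tree, so it is also feasible for PCF$_m$ with the same objective value; therefore $opt_{\mathrm{PCF}}(I)\le opt(I)$. Combining,
\[
w(\mathcal P_m)+4\cdot\pi(V\setminus V(\mathcal P_m))\le 2w(\mathcal F)+4\cdot\pi(V\setminus V(\mathcal F))=2\bigl(w(\mathcal F)+2\cdot\pi(V\setminus V(\mathcal F))\bigr)\le 4\cdot opt_{\mathrm{PCF}}(I)\le 4\cdot opt(I),
\]
which is exactly the $4$-LMP guarantee. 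The running time is that of one call to the PCF$_m$ routine plus the short-cutting, which is linear in the graph size per tree; matching the claimed $O(|V|^4)$ then just requires that the $2$-LMP algorithm for PCF$_m$ runs in $O(|V|^4)$ and that the conversion is absorbed in this bound.

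One subtlety I would be careful about is the "exactly $m$ components" requirement: the short-cutting converts each of the $m$ trees into exactly one path, so the component count is preserved and no merging or splitting is needed; I should also note the degenerate cases (a single-vertex tree becomes a trivial path, an empty component is a single vertex or can be handled by convention) so that the output genuinely has $m$ vertex-disjoint paths. The main obstacle is therefore not this top-level reduction — which is short — but the ingredient it rests on, namely establishing the $2$-LMP algorithm for PCF$_m$ itself (presumably a primal-dual / Goemans–Williamson-style moat-growing argument adapted to the "exactly $m$ trees" cardinality constraint). Since the excerpt lets me assume earlier results, I will treat the $2$-LMP for PCF$_m$ as given and present the reduction above as the proof, with the running-time accounting and the edge cases spelled out.
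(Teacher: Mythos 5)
Your proposal matches the paper's proof essentially verbatim: both invoke the $2$-LMP algorithm for PCF$_m$ on the same instance, short-cut each of the $m$ trees into a path at a factor-$2$ cost in weight while preserving the covered vertex set, and chain $w(\mathcal P_m)+4\pi(V\setminus V(\mathcal P_m))\leq 2w(F_m)+4\pi(V\setminus V(F_m))\leq 4\cdot opt_{\mathrm{PCF}}\leq 4\cdot opt_{\mathrm{PCP}}$ using the fact that any feasible PCP$_m$ solution is a feasible PCF$_m$ solution. Your write-up is if anything slightly cleaner (the paper's labels $opt_F$ and $opt_P$ appear to be swapped in its inequality chain), and your running-time accounting agrees with the paper's reliance on the $O(|V|^4)$ bound for the PCF$_m$ subroutine.
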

\begin{proof}
We first use the 2-LMP algorithm for PCF$_m$ in \cite{Liang2023a} to compute a forest $F_m$. For each component $T_i$ of $F_m$, we use the short-cutting strategy to obtain a path $P_i$ with $w(P_i)\leq 2w(T_i)$.
Denote by $opt_{F}$ and $opt_{P}$ the optimal values of the PCP$_m$ instance and the PCF$_m$ instance, respectively. Then, the collection of paths $\mathcal P_m=\{P_1,\ldots,P_m\}$ satisfies
\begin{align*}
	w(\mathcal P_m)+4\pi(V\setminus\mathcal{P}_m)
	&\leq 2w(F_m)+4\pi(V\setminus F_m)\\
	&\leq 4opt_{F}\leq 4opt_{P},
\end{align*}
where the second inequality holds because $F_m$ is a 2-LMP solution for the PCF$_m$ instance, and the third inequality holds because any solution to the PCP$_m$ instance is a feasible solution to the PCF$_m$ instance.
\end{proof}

In the following, $0<\alpha<1$ is an adjustable parameter. 

\subsubsection{Constructing $m$-vertex disjoint paths $\mathcal P_m$}

The construction is based on the following lemma.

\begin{lemma}\label{lemma-325-1}
Suppose a $k$-MinWP$_m$ instance has a feasible solution with cost at most $L$. Use the 4-LMP algorithm $\mathcal A_{PC}$ on the instance $(G,w,L/(1-\alpha)k)$ of PCP$_m$, one obtains a set $\mathcal P_{m,L}$ of $m$ vertex-disjoint paths. Then $\mathcal P_{m,L}$ spans more than $\alpha k$ vertices and the cost of $\mathcal P_{m,L}$ is at most $\frac{4(p_{m,L}-\alpha k)}{(1-\alpha)k}L$, where $p_{m,L}$ is the number of vertices spanned by the path set $\mathcal P_{m,L}$.
\end{lemma}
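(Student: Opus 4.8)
The plan is to run $\mathcal A_{PC}$ with a uniform vertex penalty chosen so that the $4$-LMP inequality, after rearrangement, collapses to exactly the claimed cost bound. Write $n=|V|$, let $\rho=L/((1-\alpha)k)$ be the (constant) penalty assigned to every vertex, so that the PCP$_m$ objective of any collection $\mathcal P$ of $m$ vertex-disjoint paths equals $w(\mathcal P)+\rho\,(n-|V(\mathcal P)|)$, and let $opt_{PC}$ denote the optimum of this PCP$_m$ instance.

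First I would bound $opt_{PC}$ from above using the hypothesis. By assumption the $k$-MinWP$_m$ instance admits a feasible solution $\mathcal Q_m$, i.e., a collection of exactly $m$ vertex-disjoint paths with $|V(\mathcal Q_m)|\ge k$ and $w(\mathcal Q_m)\le L$. Since $\mathcal Q_m$ is itself a feasible solution of the PCP$_m$ instance, $opt_{PC}\le w(\mathcal Q_m)+\rho\,(n-|V(\mathcal Q_m)|)\le L+\rho\,(n-k)$. Applying the $4$-LMP guarantee of $\mathcal A_{PC}$ to $\mathcal P_{m,L}$ and abbreviating $p_{m,L}=|V(\mathcal P_{m,L})|$ gives
$$w(\mathcal P_{m,L})+4\rho\,(n-p_{m,L})\ \le\ 4\,opt_{PC}\ \le\ 4L+4\rho\,(n-k),$$
and rearranging while substituting $\rho=L/((1-\alpha)k)$ yields
$$w(\mathcal P_{m,L})\ \le\ 4L+4\rho\,(p_{m,L}-k)\ =\ \frac{4L}{(1-\alpha)k}\bigl((1-\alpha)k+p_{m,L}-k\bigr)\ =\ \frac{4(p_{m,L}-\alpha k)}{(1-\alpha)k}\,L,$$
which is exactly the asserted cost bound.

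Since $w(\mathcal P_{m,L})\ge 0$, the last display immediately forces $p_{m,L}\ge\alpha k$. To sharpen this to $p_{m,L}>\alpha k$ I would argue by contradiction: if $p_{m,L}\le\alpha k$, then $k-p_{m,L}\ge(1-\alpha)k$, so the chain above gives $w(\mathcal P_{m,L})\le 4L-4\rho(1-\alpha)k=0$; as the metric has strictly positive edge weights, $\mathcal P_{m,L}$ must then be a collection of $m$ isolated vertices, so $p_{m,L}=m$, and feeding this back into the $4$-LMP inequality forces $m\ge\alpha k$, hence $m=\alpha k$ — a coincidence excluded by the freedom to perturb the adjustable parameter $\alpha$.

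I do not expect a genuine obstacle here: the heart of the argument is a one-line algebraic manipulation of the $4$-LMP inequality, and the only real design decision that makes everything line up — taking the penalty to be exactly $L/((1-\alpha)k)$ — is dictated by requiring the $-4\rho k$ term to cancel $4L$ up to the $(1-\alpha)$ factor. The only mildly delicate point is establishing the strict inequality $p_{m,L}>\alpha k$ rather than $p_{m,L}\ge\alpha k$, which as sketched reduces to a short degenerate-case analysis.
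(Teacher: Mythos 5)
Your proof is correct and follows essentially the same route as the paper: the same uniform penalty $L/((1-\alpha)k)$, the same upper bound on the PCP$_m$ optimum obtained by viewing the assumed feasible $k$-MinWP$_m$ solution as a PCP$_m$ solution, and the same rearrangement of the $4$-LMP inequality into the stated cost bound. The only divergence is your extra fuss over the strictness of $p_{m,L}>\alpha k$; the paper itself simply deduces this from $w(\mathcal P_{m,L})\geq 0$ (which strictly speaking only yields $p_{m,L}\geq\alpha k$), so your degenerate-case discussion, while not fully airtight, is no less rigorous than the original on that boundary point.
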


\begin{proof}
	Since $\mathcal A_{PC}$ is a 4-LMP algorithm for PCP$_m$, the set of paths it returns satisfies
	\begin{align}\label{eq3-24-1}
		w(\mathcal P_{m,L})+(n-p_{m,L})\frac{4L}{(1-\alpha)k}\leq 4opt_{PCP_m,L},
	\end{align}
	where $opt_{PCP_m,L}$ is the optimal value of instance $(G,w,L/(1-\alpha)k,m)$.
	
	Let $\mathcal P'_{m,L}$ be a feasible solution to the $k$-MinWP$_m$ instance with cost at most $L$, whose existence is guaranteed by the condition of the lemma. Suppose $\mathcal P'_{m,L}$ spans $p'_{m,L}$ vertices. Viewing $\mathcal P'_{m,L}$ as a feasible solution to the PCP$_m$ instance $(G,w,L/(1-\alpha)k)$, we have
	$$
	opt_{PCP_m,L}\leq L+(n-p'_{m,L})\frac{L}{(1-\alpha)k}\leq L+(n-k)\frac{L}{(1-\alpha)k}.
	$$
	Combining this with inequality \eqref{eq3-24-1}, we have
	\begin{align}
		w(\mathcal P_{m,L})+(n-p_{m,L})\frac{4L}{(1-\alpha)k}\leq 4L+(n-k)\frac{4L}{(1-\alpha)k}\nonumber.
	\end{align}
	Rearranging terms,
	\begin{align}\label{eq3-24-3}
		w(\mathcal P_{m,L})\leq \frac{4(p_{m,L}-\alpha k)}{(1-\alpha)k}L.
	\end{align}
	
	As a byproduct of \eqref{eq3-24-3}, since $w(\mathcal P_{m,L})\geq 0$, we have $p_{m,L}>\alpha k$, that is, $\mathcal P_{m,L}$ spans at least $\alpha k$ vertices.
\end{proof}

In particular, Lemma \ref{lemma-325-1} holds for $L=opt_{m,k}$, where $opt_{m,k}$ is the optimal cost of the $k$-MinWP$_m$.
Although we do not know $opt_{m,k}$, a binary search method (the pseudo code is given in Algorithm \ref{algo4}, whose ideas are described in the proof of Lemma \ref{coro0108-1}) can give it an estimated bound.

\begin{algorithm}
	\caption {Binary search to approximate $opt_{m,k}$.}
	\hspace*{0.02in}\raggedright{\bf Input:} $G=(V,w,\pi)$, two positive integers $k\geq m$ and two real numbers $\varepsilon, \alpha\in(0,1)$;
	
	\hspace*{0.02in}\raggedright{\bf Output:} Real number $L_1$ that approximates $opt_{m,k}$ and the corresponding path-set $\mathcal P_m=\mathcal P_{m,L_1}$.
	
	\begin{algorithmic}[1]
		\STATE $Q$ $\leftarrow$ the total length of the longest $k-m$ edges of $G$;
		\STATE $L_1\leftarrow Q$, $L_2\leftarrow 0$;
		\WHILE {$L_1-L_2> \varepsilon$}
		\STATE $L\leftarrow$ $\frac{L_1+L_2}{2}$
		\STATE $p_{m,L}\leftarrow$ the number of vertices spanned by the $m$-vertex disjoint paths $\mathcal P_{m,L}$ computed by $\mathcal{A}_{PC}$ on PCP$_m$ instance $(G,w,L/(1-\alpha)k)$;
		\IF{$p_{m,L}\leq\alpha k$}
		\STATE $L_2\leftarrow L$
		\ELSE
		\STATE $L_1\leftarrow L$
		\ENDIF
		\ENDWHILE
		\RETURN $L_1$ and $\mathcal P_m\leftarrow \mathcal P_{m,L_1}$.
	\end{algorithmic}\label{algo4}
\end{algorithm}

The number $L_1$ output by Algorithm \ref{algo4} satisfies $L_1\leq opt_{m,k}+\varepsilon$ and the path-set $\mathcal P_{m,L_1}$ satisfies the performance bounds listed in the following lemma.

\begin{lemma}\label{coro0108-1}
Given a $k$-MinWP$_m$ instance $(G,m,k)$, Algorithm \ref{algo4} computes $m$ vertex-disjoint paths $\mathcal P_m$ spanning more than $\alpha k$ vertices with cost at most $\frac{4(p_m-\alpha k)}{(1-\alpha)k}(opt_{m,k}+\varepsilon),$ where $p_m$ is the number of vertices spanned by $\mathcal P_m$.
\end{lemma}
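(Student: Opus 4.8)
The plan is to analyze the binary search in Algorithm \ref{algo4} and show that its two invariants — namely that $L_1$ always admits a feasible $k$-MinWP$_m$ solution of cost at most $L_1$ (so that Lemma \ref{lemma-325-1} applies to it), and that $L_2$ never does — are preserved throughout the loop, and then combine the terminal gap $L_1 - L_2 \le \varepsilon$ with these invariants to bound $L_1$ by $opt_{m,k} + \varepsilon$. From there the stated cost bound follows immediately by feeding $L = L_1$ into Lemma \ref{lemma-325-1}.

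First I would verify the base case. The initial value $L_1 = Q$, the total length of the $k-m$ longest edges of $G$, is an upper bound on $opt_{m,k}$: any collection of $m$ vertex-disjoint paths touching $k$ vertices uses exactly $k-m$ edges, so the optimal solution has cost at most $Q$, hence $Q \ge opt_{m,k}$ and in particular the instance has a feasible solution of cost at most $Q$. The initial value $L_2 = 0$ trivially admits no feasible solution spanning $k$ vertices with positive cost (as $k > m$ forces at least one edge of positive weight, using that $w$ is a metric with positive weights), so by the contrapositive of Lemma \ref{lemma-325-1} the call $\mathcal A_{PC}$ on $(G,w,L_2/(1-\alpha)k)$ cannot span more than $\alpha k$ vertices; in any case $L_2 = 0 < opt_{m,k}$. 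Next I would check the inductive step: when the midpoint $L = (L_1+L_2)/2$ is tested, if $p_{m,L} \le \alpha k$ then by the contrapositive of Lemma \ref{lemma-325-1}, $L$ cannot be the cost of a feasible $k$-MinWP$_m$ solution, hence $L < opt_{m,k}$, and we correctly set $L_2 \leftarrow L$, preserving "$L_2 < opt_{m,k}$''; if instead $p_{m,L} > \alpha k$, then... here is the subtle point. Lemma \ref{lemma-325-1} as stated is an implication, not an equivalence: $p_{m,L} > \alpha k$ does not by itself certify that $L \ge opt_{m,k}$. However, the search only needs the weaker invariant that $L_1$ continues to admit a feasible solution of cost at most $L_1$ — and this is monotone: if it held for the old $L_1$ it holds for any larger value, and when we move $L_1$ downward to $L$ we must argue $L$ still admits such a solution. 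I would handle this by observing that throughout the run $L_2 < opt_{m,k} \le L_1$ is maintained as the genuine invariant: the $L_2$ side is always a strict lower bound (just argued), and since the interval $[L_2, L_1]$ always contains $opt_{m,k}$, after termination $L_1 \le L_2 + \varepsilon < opt_{m,k} + \varepsilon$. The key realization is that we never need $L_1$ itself to satisfy the hypothesis of Lemma \ref{lemma-325-1} during the loop; we only need it at the very end, and at the end $L_1 \ge opt_{m,k}$ — wait, that is false too, since $L_1 < opt_{m,k} + \varepsilon$ could be below $opt_{m,k}$ only if... no: $opt_{m,k} \le L_1$ is exactly the invariant, so $L_1 \ge opt_{m,k}$ holds at termination, hence the instance has a feasible solution of cost $opt_{m,k} \le L_1$ and Lemma \ref{lemma-325-1} applies with $L = L_1$.

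Once the invariant $opt_{m,k} \le L_1 \le opt_{m,k} + \varepsilon$ is established, I would finish by applying Lemma \ref{lemma-325-1} directly with $L = L_1$: the returned path set $\mathcal P_m = \mathcal P_{m,L_1}$ spans $p_m := p_{m,L_1} > \alpha k$ vertices (this is also guaranteed by the loop exit condition, since the loop only returns when the last accepted value has $p_{m,L_1} > \alpha k$), and its cost is at most
$$
\frac{4(p_m - \alpha k)}{(1-\alpha)k}\, L_1 \;\le\; \frac{4(p_m - \alpha k)}{(1-\alpha)k}\,(opt_{m,k} + \varepsilon),
$$
which is exactly the claimed bound. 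I would also note in passing that the loop runs $O(\log(Q/\varepsilon))$ iterations, each dominated by one call to $\mathcal A_{PC}$ (time $O(|V|^4)$ by Theorem \ref{theo610-1}), so the whole procedure is polynomial.

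The main obstacle I anticipate is precisely the asymmetry in Lemma \ref{lemma-325-1} flagged above: establishing rigorously that the binary search maintains $opt_{m,k} \in [L_2, L_1]$ despite the fact that the test $p_{m,L} > \alpha k$ does not directly certify $L \ge opt_{m,k}$. The resolution is to carry the invariant on the $L_2$ side (a clean strict lower bound via the contrapositive of Lemma \ref{lemma-325-1}) and on the $L_1$ side to observe it is never decreased below a value that has already failed the "$p \le \alpha k$'' test — equivalently, every value ever assigned to $L_1$ lies weakly above every value ever assigned to $L_2$, and the true optimum sits in the shrinking gap. A secondary point worth stating carefully is why $p_{m,L_1} > \alpha k$ for the final $L_1$: either $L_1$ was never updated (so $L_1 = Q \ge opt_{m,k}$ and Lemma \ref{lemma-325-1} gives it directly), or the last update to $L_1$ set it to a tested value with $p_{m,L} > \alpha k$, and $\mathcal P_{m,L_1}$ is that very path set.
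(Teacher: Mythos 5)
Your proposal follows the same route as the paper's own proof: initialize $L_1=Q\ge opt_{m,k}$, use the contrapositive of Lemma~\ref{lemma-325-1} to certify $L_2<opt_{m,k}$ whenever $p_{m,L}\le\alpha k$, conclude $L_1<opt_{m,k}+\varepsilon$ from the terminal gap $L_1-L_2\le\varepsilon$, and then feed $L=L_1$ into Lemma~\ref{lemma-325-1} to obtain both the spanning guarantee and the cost bound. In structure and in every essential step it matches the paper.

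The one place your write-up overreaches is the claim that ``$opt_{m,k}\le L_1$ is maintained as the genuine invariant.'' You yourself identify exactly why this cannot be deduced from the algorithm's test: Lemma~\ref{lemma-325-1} is a one-way implication, so $p_{m,L}>\alpha k$ does not certify $L\ge opt_{m,k}$, and when the loop lowers $L_1$ to such a midpoint the invariant may break (the threshold at which $\mathcal A_{PC}$ starts returning more than $\alpha k$ vertices can sit strictly below $opt_{m,k}$). Your resolution --- ``$opt_{m,k}\le L_1$ is exactly the invariant, so $L_1\ge opt_{m,k}$ holds at termination'' --- is circular. To be fair, the paper's proof does not establish $L_1\ge opt_{m,k}$ either; it only derives $L_1<opt_{m,k}+\varepsilon$ and then applies the cost bound of Lemma~\ref{lemma-325-1} at $L=L_1$, which implicitly requires the same hypothesis (a feasible $k$-MinWP$_m$ solution of cost at most $L_1$). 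So your argument shares the paper's one soft spot; the difference is only that you surfaced it and then dismissed it by assertion. A clean repair is to rerun the derivation inside Lemma~\ref{lemma-325-1} at $L=L_1$ using the optimal $k$-MinWP$_m$ solution as the comparison solution for the PCP$_m$ instance; this gives $w(\mathcal P_{m,L_1})\le 4\,opt_{m,k}+(p_{m,L_1}-k)\frac{4L_1}{(1-\alpha)k}$, which yields the stated inequality whenever $p_{m,L_1}\ge k$ and yields $w(\mathcal P_{m,L_1})\le 4\,opt_{m,k}$ otherwise --- a bound that suffices for every downstream use of the lemma.
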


\begin{proof}
	Let $Q$ be the total length of the $k-m$ longest edges in graph $G$. Then $opt_{m,k}\in [0,Q]$ (because $m$ vertex-disjoint paths spanning $k$ vertices contains exactly $k-m$ edges). The algorithm conducts a binary search on $L\in [0,Q]$, using the 4-LMP algorithm $\mathcal A_{PC}$ on the instance $(G,w,L/(1-\alpha)k)$ of PCP$_m$, find out two numbers $L_1,L_2$ such that
	\begin{equation}\label{eq0108-2}
		p_{m,L_1}> \alpha k,\ p_{m,L_2}\leq\alpha k \ \mbox{and}\ 0<L_1-L_2\leq \varepsilon.
	\end{equation}
Note that the larger the penalty is, the more vertices are spanned by the computed path, so $L_1>L_2$.
	
	Let $\mathcal P_m=\mathcal P_{m,L_1}$. Then $\mathcal P_m$ spans more than $\alpha k$ vertices. By Lemma \ref{lemma-325-1}, taking $L\geq opt_{m,k}$ can always yield a solution spanning more than $\alpha k$ vertices. Since $p_{m,L_2}\leq\alpha k$, we have $L_2<opt_{m,k}$. As a consequence, $L_1<opt_{m,k}+\varepsilon$, and thus $w(\mathcal P_m)=w(\mathcal P_{m,L_1})\leq \frac{4(p_{m,L_1}-\alpha k)}{(1-\alpha)k}L_1\leq \frac{4(p_m-\alpha k)}{(1-\alpha)k}(opt_{m,k}+\varepsilon)$. The lemma is proved.
\end{proof}

\begin{rmk}\label{rmk0812-1}
Let $T_{PC}$ be the running time of $\mathcal A_{PC}$. Finding the desired $L_1$ takes time $O( T_{PC}\log \frac{Q}{\varepsilon})$. Note that in a sweep coverage problem, w.o.l.g., we can assume that $w(e)\leq 1$ by both scaling the speed and the edge weight, without changing the approximation ratio. In this way, the upper bound of $opt_{m,k}$ is $k-m$ and Algorithm \ref{algo4} takes time  $O(T_{PC}\log\frac{k-m}{\varepsilon})$.
\end{rmk}

Note that the cost of the $m$-vertex disjoint paths $\mathcal P_m$, whose existence is guaranteed by Lemma \ref{coro0108-1}, is dependent on $p_m$, the number of vertices spanned by $\mathcal P_m$. If $p_m\leq 2k$, then by Lemma \ref{coro0108-1}, $\mathcal P_m$ spans more than $\alpha k$ vertices and $w(\mathcal P_m)\leq \frac{4(2-\alpha)}{1-\alpha}(opt_{m,k}+\varepsilon)$, implying that $\mathcal P_m$ is an $(\alpha, \frac{8-4\alpha}{1-\alpha}+O(\varepsilon))$-bicriteria approximation to the $k$-MinWP$_m$ instance. 

However, when $p_m$ is large, the approximation ratio might be large. To mitigate this, we proceed to {\em trim} the paths in the subsequent subsection. This trimming ensures that the paths span at least $k$ vertices, attaining an approximation ratio of at most $\frac{16}{(1-\alpha)}+\varepsilon$. The details are presented in Algorithm \ref{algo1} and the ideas are explained as follows.

\subsubsection{Dealing with the case when $\mathcal P_m$ spans many vertices}\label{sec0722-1}

We iteratively divide each path into two sub-paths that span almost equal number of vertices, and discard the one with the larger {\em cost-to-vertex} ratio. The process terminates when the number of remaining vertices is less than $2k$. Related notations are given as follows.

A path is {\em trivial} if it contains only one vertex. Consider a non-trivial path $P$ containing $q$ vertices. For even $q$, denote the sub-path induced by the leftmost (resp. rightmost) $q/2$ vertices as $P^{(0)}$ (resp. $P^{(1)}$). For $j\in\{0,1\}$, denote $q^{(j)}$ the number of vertices contained in $P^{(j)}$, and let $w^{(j)}$ be the weight of $P^{(j)}$ plus the weight of the {\em middle edge} between $P^{(0)}$ and $P^{(1)}$. If $q$ is odd, then removing the {\em middle vertex} $v$ results in two sub-paths $P^{(0)}$ and $P^{(1)}$, containing $q^{(0)}=q^{(1)}=(q-1)/2$ vertices, respectively. In this case, for $j\in \{0,1\}$, let $w^{(j)}$ be the weight of $P^{(j)}$ plus the weight of the edge connecting $P^{(j)}$ and $v$. These notations are visually illustrated in Figure \ref{fig330-1} below.


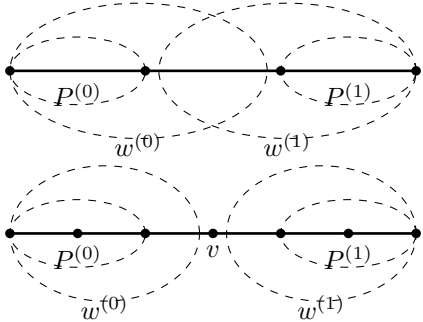
\begin{figure}[htpb]
\begin{center}
	\begin{tikzpicture}[scale=0.9]
		\draw [line width=1pt](-8,0)--(-2,0);
		\fill (-6,0) circle (2pt);
		\fill (-4,0) circle (2pt);
		\fill (-8,0) circle (2pt);
		\fill (-2,0) circle (2pt);
		\draw(-7,0)[dashed] ellipse(1 and 0.5)node[below = 1pt]{$P^{(0)}$};
		\draw(-6.1,0)[dashed] ellipse(1.9 and 1)node[below = 20pt]{$w^{(0)}$};
		\draw(-3,0)[dashed] ellipse(1 and 0.5)node[below = 1pt]{$P^{(1)}$};
		\draw(-3.9,0)[dashed] ellipse(1.9 and 1)node[below = 20pt]{$w^{(1)}$};
	\end{tikzpicture}
	\hskip 0.2cm
	\begin{tikzpicture}[scale=0.9]
		\draw [line width=1pt](0,0)--(6,0);
		\fill (0,0) circle (2pt);
		\fill (1,0) circle (2pt);
		\fill (2,0) circle (2pt);
		\fill (3,0) circle (2pt)node[below = 1pt]{$v$};
		\fill (4,0) circle (2pt);
		\fill (5,0) circle (2pt);
		\fill (6,0) circle (2pt);
		\draw(1,0)[dashed] ellipse(1 and 0.5)node[below = 1pt]{$P^{(0)}$};
		\draw(5,0)[dashed] ellipse(1 and 0.5)node[below = 1pt]{$P^{(1)}$};
		\draw(1.4,0)[dashed] ellipse(1.4 and 1)node[below = 20pt]{$w^{(0)}$};
		\draw(4.6,0)[dashed] ellipse(1.4 and 1)node[below = 20pt]{$w^{(1)}$};
	\end{tikzpicture}
	\vskip 0.2cm\caption {An illustration of $P^{(j)}$ and $w^{(j)}$ for $j\in\{0,1\}$.}\label{fig330-1}
\end{center}
\end{figure}

\begin{algorithm}[htpb]
\caption {Trimming paths in $\mathcal P_m$}
\hspace*{0.02in}\raggedright{\bf Input:} A set $\mathcal P_m$ of $m$ vertex-disjoint paths spanning more than $2k$ vertices.

\hspace*{0.02in}\raggedright{\bf Output:} A set $\mathcal P_m^{M}$ of $m$ vertex-disjoint paths spanning  $p_m^M\in[k,2k]$ vertices.

\begin{algorithmic}[1]
	\STATE $\mathcal P_m^M\leftarrow \mathcal P_m$;
	\WHILE {$\mathcal P_m^M$ spans more than $2k$ vertices}
	\FOR{each non-trivial path $P\in\mathcal P_m^M$}
	\STATE $j^*\leftarrow \arg\max_{j\in\{0,1\}}\{w^{(j)}/q^{(j)}\}$;\label{linej}
	\STATE $P\leftarrow P-P^{(j^*)}$ and update $\mathcal P_m^M$;
	\ENDFOR
	\ENDWHILE
	\RETURN $\mathcal P_m^{M}$.
\end{algorithmic}\label{algo1}
\end{algorithm}

The main result of this subsection is presented in Lemma \ref{lem0108-6}, which demonstrates that the trimmed path set $\mathcal P^M_m$ comprises a minimum of $k$ vertices while experiencing a substantial reduction in its weight. The proof of this lemma relies on the utilization of the following two technical lemmas (Lemma \ref{lemma-427-1} and Lemma \ref{lemma-427-2}). We first introduce some necessary notations. A path $P\in\mathcal P_m$ is said to be {\em short} if it is trimmed into a trivial path before the end of the algorithm, otherwise, it is called {\em long}. Denote by $\mathcal P^{0}$ and $\mathcal P^{>0}$ the sets of short paths and long paths, respectively, and let $\mathcal P^{\geq 0}=\mathcal P^{>0}\cup\mathcal P^{0}$. For a path $P\in\mathcal P_m$, denote by $P_l$ the remaining segment of $P$ at the end of the $l$-th iteration. In particular, $P_0=P$. Furthermore, denote by $P_M$ the remaining segment of $P$ at the end of the algorithm. Let $q(P_l)$ be the number of vertices spanned by $P_l$ and let $w(P_l)$ be the weight of $P_l$.

The proofs of Lemma \ref{lemma-427-1} and Lemma \ref{lemma-427-2} make use of the following observation.

\begin{property}\label{obser331-1}
	For a set of positive numbers $a_1,\ldots,a_t$ and $b_1,\ldots,b_t$, $$\min_{i=1,\ldots,t}\frac{a_i}{b_i}\leq\frac{a_1+a_2+\ldots+a_t}{b_1+b_2+\ldots+b_t}\leq \max_{i=1,\ldots,t}\frac{a_i}{b_i}.$$
\end{property}

\begin{lemma}\label{lemma-427-1}
For the set $\mathcal P^{>0}$ of long paths,
$$
\sum\limits_{P\in \mathcal P^{>0}}\frac{q(P_M)}{q(P_0)}w(P)\leq 2\cdot\frac{\sum_{P\in \mathcal P^{>0}} q(P_M)}{|V(\mathcal P^{>0})|}w(\mathcal P^{>0})
.$$
\end{lemma}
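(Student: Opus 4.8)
The plan is to analyze a single long path $P\in\mathcal P^{>0}$ through its successive trimmings $P_0\supseteq P_1\supseteq\cdots\supseteq P_M$, and to show that the weight we "keep" is geometrically controlled by the fraction of vertices we keep. Concretely, in iteration $l+1$ the current segment $P_l$ (which is non-trivial, since $P$ is long) is split into $P_l^{(0)}$ and $P_l^{(1)}$, and we discard the half $P_l^{(j^*)}$ maximizing the cost-to-vertex ratio $w^{(j)}/q^{(j)}$; by Property~\ref{obser331-1} applied to the two halves (together with the middle edge, which is counted on the discarded side or split appropriately), the discarded ratio is at least the ratio $w(P_l)/q(P_l)$ of the whole segment — more precisely $w(P_{l+1})/q(P_{l+1})\le w(P_l)/q(P_l)$, since we throw away the heavier-per-vertex piece and keep the lighter one. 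So the cost-to-vertex ratio is non-increasing along the trimming of each path.

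Next I would turn this monotonicity into the claimed bound. Since $w(P_M)/q(P_M)\le w(P_0)/q(P_0)=w(P)/q(P)$ for every long path $P$, we get $\frac{q(P_M)}{q(P_0)}w(P)\ge w(P_M)$; but that is the wrong direction, so instead I use it as $w(P_M)\le \frac{q(P_M)}{q(P_0)}w(P)$ is not what we want either — rather, the point is that at the step where $P_l$ was split, roughly $q(P_l)\approx q(P_{l+1})$ up to a factor close to $2$ (each split halves the vertex count up to rounding, and "long" means it never shrank to a single vertex), and the discarded weight $w(P_l)-w(P_{l+1})$ has ratio at least $w(P_{l+1})/q(P_{l+1})$. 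Summing the discarded pieces telescopically and using that $q(P_l)\le 2\,q(P_{l+1})$ at each long step, the total \emph{retained} weight $w(P_M)$ satisfies $w(P_M)\le 2\,\frac{q(P_M)}{q(P_0)}\,w(P)$ — the factor $2$ coming exactly from the "almost equal halves" rounding slack. Then I apply Property~\ref{obser331-1} once more, this time to the collection $\{(q(P_M)\cdot\frac{w(P)}{q(P_0)},\,q(P_M))\}_{P\in\mathcal P^{>0}}$, to replace the per-path factor $\frac{w(P)}{q(P_0)}$ by the aggregate $\frac{w(\mathcal P^{>0})}{|V(\mathcal P^{>0})|}$ via the inequality $\sum_P \frac{q(P_M)}{q(P_0)}w(P)\le \frac{\sum_P q(P_M)}{|V(\mathcal P^{>0})|}\,w(\mathcal P^{>0})$ after bounding each ratio by the max; combining with the factor $2$ gives the statement.

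Let me reorganize the logic cleanly: Step~1, prove the cost-to-vertex ratio $w(P_{l+1})/q(P_{l+1})$ is non-increasing in $l$ along each path, using Property~\ref{obser331-1} on the two halves at the split. Step~2, for a single long path relate $w(P_M)$ to $\frac{q(P_M)}{q(P_0)}w(P)$: since the ratio only decreases, $w(P_M)=q(P_M)\cdot\frac{w(P_M)}{q(P_M)}\le q(P_M)\cdot\frac{w(P_0)}{q(P_0)}=\frac{q(P_M)}{q(P_0)}w(P)$, which already gives the bound \emph{without} the factor $2$ for the retained weight — so I should double check whether the factor $2$ is needed at all, and it is, because the left-hand side of the lemma is $\sum \frac{q(P_M)}{q(P_0)}w(P)$, not $\sum w(P_M)$, so actually Step~2 should go the other way and the factor $2$ is an artifact of bounding $q(P_0)$ in terms of $q(P_M)$ and the final total. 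Step~3, sum over $P\in\mathcal P^{>0}$ and invoke Property~\ref{obser331-1} to pass from the weighted sum of individual ratios to the single aggregate ratio, absorbing the constant.

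The main obstacle I anticipate is Step~1/Step~2 bookkeeping around the \emph{middle edge} (or middle vertex) and the rounding when $q(P_l)$ is odd: the definitions assign the middle edge's weight to whichever side is being considered, so when applying Property~\ref{obser331-1} one has to be careful that $w^{(0)}+w^{(1)}$ overcounts the middle edge (or correctly accounts for the removed middle vertex), and that $q^{(0)}+q^{(1)}$ equals $q$ or $q-1$. Getting a clean inequality $w(P_{l+1})/q(P_{l+1})\le w(P_l)/q(P_l)$ out of this, and then controlling $q(P_0)/q(P_M)$ against the number of trimming steps (to produce exactly the factor $2$ and not something worse), is where the real care is needed; everything after that is a two-line application of Property~\ref{obser331-1}.
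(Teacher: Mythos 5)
There is a genuine gap. Stripped of notation, Lemma~\ref{lemma-427-1} asserts that the \emph{individual} shrinking ratios $r_P=q(P_M)/q(P_0)$ of the long paths are each at most twice the \emph{aggregate} ratio $\bar r=\sum_{P}q(P_M)/|V(\mathcal P^{>0})|$: since the weights $w(P)$ are essentially arbitrary positive numbers here (concentrate the weight on a single path), the stated inequality is equivalent to $\max_P r_P\leq 2\bar r$, and conversely that one fact plus linearity immediately yields the lemma. Your proposal never establishes this cross-path comparison. All of your factor-$2$ bookkeeping ($q(P_l)\leq 2q(P_{l+1})$, the ``almost equal halves'' rounding slack) is \emph{intra}-path: it relates $q(P_0)$ to $q(P_M)$ along one fixed path, which cannot by itself compare $r_{P_i}$ to $r_{P_j}$ for two different paths. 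Moreover the intermediate inequality you actually write down, $\sum_P \frac{q(P_M)}{q(P_0)}w(P)\leq \frac{\sum_P q(P_M)}{|V(\mathcal P^{>0})|}\,w(\mathcal P^{>0})$ \emph{without} the factor $2$, is false in general (e.g.\ two long paths with $q_0=4\to q_M=2$ and $q_0=3\to q_M=2$ after one round have ratios $1/2$ and $2/3$ while $\bar r=4/7$; put all the weight on the second path). The missing idea is that every long path survives all $M$ rounds of the while-loop and is therefore halved, up to the odd/even rounding, exactly the same number $M$ of times; this pins every $r_P$ into the window $\bigl[\tfrac{1}{2^M},\,\tfrac{2}{2^M+1}\bigr]$, so any two shrinking ratios differ by a factor less than $2$, the minimum ratio is at most $\bar r$ by Property~\ref{obser331-1}, and hence $\max_P r_P\leq 2\bar r$. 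That is exactly how the paper argues, and ``the constant gets absorbed'' precisely at this step, not afterwards.

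A secondary point: most of your Steps 1 and 2 --- the monotonicity of the cost-to-vertex ratio $w(P_l)/q(P_l)$ under trimming and the bound $w(P_M)\leq \frac{q(P_M)}{q(P_0)}w(P)$ --- do not belong to this lemma at all; they are the content of the proof of Lemma~\ref{lem0108-6}, where the present lemma is consumed. You noticed mid-argument that those bounds point the wrong way for the statement at hand, but the reorganized plan still ends with ``invoke Property~\ref{obser331-1}\ldots absorbing the constant'' without saying where the constant comes from; that unjustified step is the entire content of the lemma.
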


\begin{proof}
	For simplicity of notations, for each path $P_i\in\mathcal P$, denote by $q_{i,0}$ and $q_{i,M}$ the number of vertices spanned by $P_i$ and the remaining segment of $P_i$ at the end of Algorithm \ref{algo1}, respectively. Call $q_{i,M}/q_{i,0}$ as the {\em shrinking ratio} of path $P_i$. If we can show
	\begin{equation}\label{cla-428-1}
		\frac{q_{i,M}/q_{i,0}}{q_{j,M}/q_{j,0}}\leq 2\ \mbox{holds for any}\ P_i,P_j\in\mathcal P^{>0},
	\end{equation}
	then the lemma can be proved as follows. Suppose $P_{i_0}$ is the path of $\mathcal P^{>0}$ with the minimum shrinking ratio. Note that
	$$
	\frac{\sum_{P\in \mathcal P^{>0}}{q(P_M)}}{|V(\mathcal P^{>0})|}=\frac{\sum_{P_i\in \mathcal P^{>0}}{ q_{i,M}}}{\sum_{P_i\in \mathcal P^{>0}}q_{i,0}}
	$$
	can be viewed as an average shrinking ratio. So
	$$
	\frac{\sum_{P\in \mathcal P^{>0}}q(P_M)}{|V(\mathcal P^{>0})|}\geq \frac{q_{i_0,M}}{q_{i_0,0}}.
	$$
	Hence, if \eqref{cla-428-1} is true, then for any $i$, we have
	$$
	\frac{q_{i,M}}{q_{i,0}}\leq 2\cdot \frac{q_{i_0,M}}{q_{i_0,0}}\leq 2\cdot \frac{\sum_{P\in \mathcal P^{>0}}{q(P_M)}}{|V(\mathcal P^{>0})|},
	$$
	and thus
	\begin{align*}
		\sum\limits_{P\in \mathcal P^{>0}}{\frac{q(P_M)}{q(P_0)}}w(P)  &\leq
		2\cdot \frac{\sum_{P\in \mathcal P^{>0}}{ q(P_M)}}{|V(\mathcal P^{>0})|}\sum_{P\in\mathcal P^{>0}}w(P)\\
		&=2\cdot \frac{\sum_{P\in \mathcal P^{>0}}{ q(P_M)}}{|V(\mathcal P^{>0})|}w(\mathcal P^{>0}).
	\end{align*}
	
	The following part is devoted to the proof of property \eqref{cla-428-1}. Suppose every long path is trimmed $M$ times. It can be checked that the minimum shrinking ratio is $\frac1{2^M}$ (when $q_{i,0}$ is divisible by $2^M$) and the maximum shrinking ratio is $\frac{1}{2^M-\frac{2^{M}-1}{q_{i,M}}}$ (when $q_{i,M}\geq 2$ and every node with odd label $a$ has its parent labeled as $2a-1$). So, for any pair of long paths $P_i$ and $P_j$, we have
	$$
	\frac{q_{i,M}/q_{i,0}}{q_{j,M}/q_{j,0}}\leq \frac{1}{1-\frac{1-1/{2^M}}{q_{j,M}}}\leq \frac{1}{1-{\frac{1}{2}}}=2.
	$$
	The lemma is proved.
\end{proof}

\begin{lemma}\label{lemma-427-2}
For the path sets $\mathcal P^{>0}$ and $\mathcal P^{0}$,
$$
\frac{\sum_{P\in \mathcal P^{>0}}q(P_M)}{|V(\mathcal P^{>0})|}\leq \frac{\sum_{P\in \mathcal P^{>0}}q(P_M)+\sum_{P\in \mathcal P^{0}}q(P_M)}{|V(\mathcal P^{>0})|+|V(\mathcal P^{0})|}.
$$
\end{lemma}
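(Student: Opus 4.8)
The plan is to recognise the claimed inequality as a mediant inequality and then reduce it to a comparison of per-path ``shrinking ratios'' $q(P_M)/q(P_0)$. Abbreviate $A=\sum_{P\in\mathcal P^{>0}}q(P_M)$, $B=|V(\mathcal P^{>0})|$, $C=\sum_{P\in\mathcal P^{0}}q(P_M)$ and $D=|V(\mathcal P^{0})|$, so the target is $A/B\le(A+C)/(B+D)$. Since the $m$ paths of $\mathcal P_m$ are vertex-disjoint, $B=\sum_{P\in\mathcal P^{>0}}q(P_0)$ and $D=\sum_{P\in\mathcal P^{0}}q(P_0)$, while each short path has been trimmed to a single vertex so that $C=|\mathcal P^{0}|$. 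Assuming $B,D>0$ (the degenerate cases being trivial), $A/B\le(A+C)/(B+D)$ is equivalent to $A/B\le C/D$, so it suffices to prove the latter.

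First I would pass from the aggregate quotients to the per-path ratios via Property~\ref{obser331-1}:
\begin{align*}
\frac{A}{B}&=\frac{\sum_{P\in\mathcal P^{>0}}q(P_M)}{\sum_{P\in\mathcal P^{>0}}q(P_0)}\le\max_{P\in\mathcal P^{>0}}\frac{q(P_M)}{q(P_0)},\\
\frac{C}{D}&=\frac{\sum_{P\in\mathcal P^{0}}q(P_M)}{\sum_{P\in\mathcal P^{0}}q(P_0)}\ge\min_{P\in\mathcal P^{0}}\frac{q(P_M)}{q(P_0)}.
\end{align*}
Hence it is enough to show that the shrinking ratio of every long path is at most the shrinking ratio of every short path.

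This last inequality is a trim-counting argument. Let $M$ denote the number of iterations of the while-loop of Algorithm~\ref{algo1}, which is exactly the number of times a long path is trimmed. As established in the proof of Lemma~\ref{lemma-427-1}, a long path $P$ has $q(P_M)\ge 2$ and $q(P_M)/q(P_0)\le\frac{1}{2^M-(2^M-1)/q(P_M)}\le\frac{2}{2^M+1}<\frac{1}{2^{M-1}}$. A short path $P$, by definition, becomes trivial before the loop terminates, so it is trimmed at most $M-1$ times; since one trim sends a vertex count $q$ to $\lceil q/2\rceil$, after $t$ trims the count is $\lceil q(P_0)/2^t\rceil$, and reaching $1$ within $M-1$ trims forces $q(P_0)\le 2^{M-1}$, whence $q(P_M)/q(P_0)=1/q(P_0)\ge 1/2^{M-1}$. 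Combining the two estimates, every long path has shrinking ratio strictly below $1/2^{M-1}$, which is at most the shrinking ratio of every short path; thus $\max_{P\in\mathcal P^{>0}}q(P_M)/q(P_0)\le\min_{P\in\mathcal P^{0}}q(P_M)/q(P_0)$, and $A/B\le C/D$ follows, completing the argument.

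I expect the trim-count bookkeeping to be the crux: the argument works only because a short path is trimmed at most $M-1$ times (so that $1/2^{M-1}$ genuinely dominates the long-path bound $2/(2^M+1)$), which hinges on reading ``short'' as ``becomes trivial strictly before the last iteration.'' Everything else is the mediant inequality together with the two applications of Property~\ref{obser331-1}.
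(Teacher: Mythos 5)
Your proof is correct and follows essentially the same route as the paper's: reduce the mediant inequality to a pairwise comparison of shrinking ratios via Property~\ref{obser331-1}, then bound every long path's ratio by roughly $1/2^{M-1}$ from above and every short path's ratio by $1/2^{M_j}\ge 1/2^{M-1}$ from below, using that short paths are trimmed at most $M-1$ times. The subtlety you flag at the end (that ``short'' must mean becoming trivial strictly before the final iteration) is present in the paper's argument as well, which likewise asserts $M_j\le M-1$ without further comment.
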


\begin{proof}
	To prove the lemma, it suffices to prove that $\forall P_i\in\mathcal P^{> 0}$ and $P_j\in\mathcal P^{0}$,
	\begin{equation}\label{cla-52-1}
		\frac{q_{i,M_i}}{q_{i,0}}\leq \frac{q_{j,M_j}}{q_{j,0}},
	\end{equation}
	where $q_{i,M_i}$ (resp. $q_{j,M_j}$) denotes the number of vertices spanned by long (resp. short) path $P_i$ (resp. $P_j$) at the end of Algorithm~\ref{algo1}.
	If \eqref{cla-52-1} is true, then by Observation \ref{obser331-1}, the lemma can be proved.
	
	Note that $M_i=M$ and $q_{j,M_j}=1$. Furthermore, any short path $P_j$ is trimmed $M_j\leq M-1$ times. By the argument in the proof of the previous lemma, the maximum shrinking ratio for long path $P_i$ is $\frac{1}{2^{M}-\frac{2^{M}-1}{q_{i,M}}}$ and the minimum shrinking ratio for short path $P_j$ is $\frac{1}{2^{M_{j}}}$, So
	$$
	\frac{q_{i,M}/q_{i,0}}{q_{j,M_j}/q_{j,0}}\leq \frac{2^{M_j}}{2^{M}-\frac{2^{M}-1}{q_{i,M}}}\leq\frac{2^{M-1}}{2^{M}-\frac{2^{M}-1}{q_{i,M}}}\leq 1.
	$$
	Hence, \eqref{cla-52-1} holdds and the lemma is proved.
\end{proof}

We are ready to prove the main lemma of this subsection.

\begin{lemma}\label{lem0108-6}
For a set $\mathcal P_m$ of $m$ vertex-disjoint paths spanning $p_m>2k$ vertices, the set $\mathcal P_m^M$ computed by Algorithm \ref{algo1} contains $m$ vertex-disjoint paths, spans $p_m^M$ vertices with $k<p_m^M\leq 2k$, and has cost
$$
w(\mathcal P_m^{M})\leq 2\cdot \frac{p_m^M}{p_m}\cdot w(\mathcal P_m).
$$
\end{lemma}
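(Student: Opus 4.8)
My plan is to verify the three assertions of the lemma separately, treating the cost bound as the substantive part that feeds the two technical lemmas already established.

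\textbf{Structure and vertex count.} That $\mathcal P_m^M$ is a set of $m$ vertex-disjoint paths is immediate from Algorithm~\ref{algo1}: every iteration only deletes vertices and edges from existing paths and never deletes a whole path or merges two, so the count stays $m$ and disjointness is preserved (a path may shrink to a single vertex, which still counts as a path). The upper bound $p_m^M\le 2k$ is forced by the termination condition of the loop. For the lower bound I would show that one pass of the inner \emph{for}-loop shrinks the total span by a factor of at most $1/2$: a non-trivial path on $q$ vertices is replaced by a sub-path on $\lceil q/2\rceil\ge q/2$ vertices — here it is essential that in the odd case the retained half keeps the middle vertex — while trivial paths are untouched; summing over all $m$ paths, the new span is at least half the old one. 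Since the lemma assumes $p_m>2k$, the loop runs at least once, each pass strictly decreases the span, and we enter the last pass with more than $2k$ spanned vertices, hence leave it with more than $k$; this gives $k<p_m^M\le 2k$.

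\textbf{Cost bound.} Split $\mathcal P_m=\mathcal P^{>0}\sqcup\mathcal P^{0}$ into long and short paths. A short path ends as a single vertex, so it contributes $0$ to the weight, and $w(\mathcal P_m^M)=\sum_{P\in\mathcal P^{>0}}w(P_M)$. The key local fact I need is
\begin{equation}\label{eq:ratio-nonincrease}
	w(P_M)\le \frac{q(P_M)}{q(P_0)}\,w(P)\qquad\text{for every }P\in\mathcal P^{>0},
\end{equation}
i.e. trimming never increases a path's cost-to-vertex ratio. To prove this I would show that in one trimming step $P_l\to P_{l+1}$ one has $w(P_{l+1})/q(P_{l+1})\le w(P_l)/q(P_l)$: the selection rule in line~\ref{linej} of Algorithm~\ref{algo1} discards the half with the larger $w^{(j)}/q^{(j)}$, so the retained half satisfies $w^{(j^{**})}/q^{(j^{**})}=\min_j w^{(j)}/q^{(j)}\le (w^{(0)}+w^{(1)})/(q^{(0)}+q^{(1)})$ by Property~\ref{obser331-1}; a short parity case analysis (in the even case $q^{(0)}+q^{(1)}=q(P_l)$ and $w^{(0)}+w^{(1)}=w(P_l)+w(\text{middle edge})$, with the middle edge then dropped from the retained path; in the odd case $q^{(0)}+q^{(1)}=q(P_l)-1$, $w^{(0)}+w^{(1)}=w(P_l)$, and the retained path gains the middle vertex) converts this into the desired one-step inequality, which telescopes to \eqref{eq:ratio-nonincrease}.

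Then I assemble: using \eqref{eq:ratio-nonincrease}, then Lemma~\ref{lemma-427-1}, then Lemma~\ref{lemma-427-2} together with the identities $\sum_{P\in\mathcal P^{>0}}q(P_M)+\sum_{P\in\mathcal P^{0}}q(P_M)=p_m^M$ and $|V(\mathcal P^{>0})|+|V(\mathcal P^{0})|=p_m$, and finally $w(\mathcal P^{>0})\le w(\mathcal P_m)$,
\begin{align*}
	w(\mathcal P_m^M)&=\sum_{P\in\mathcal P^{>0}}w(P_M)
	\le \sum_{P\in\mathcal P^{>0}}\frac{q(P_M)}{q(P_0)}\,w(P)\\
	&\le 2\cdot\frac{\sum_{P\in\mathcal P^{>0}}q(P_M)}{|V(\mathcal P^{>0})|}\,w(\mathcal P^{>0})\\
	&\le 2\cdot\frac{p_m^M}{p_m}\,w(\mathcal P^{>0})
	\le 2\cdot\frac{p_m^M}{p_m}\,w(\mathcal P_m),
\end{align*}
which is the claimed bound. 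The delicate step is \eqref{eq:ratio-nonincrease}: the parity bookkeeping of the middle edge (even case) and the retained middle vertex (odd case) has to be done exactly right, since an off-by-one in the vertex count of the retained half would simultaneously destroy ratio monotonicity and weaken the lower bound on $p_m^M$. Everything else — the structural claim, the $\le 2k$ bound, and the final assembly — is routine once the two technical lemmas and the short-path observation are in hand.
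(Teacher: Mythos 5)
Your proposal is correct and follows essentially the same route as the paper: the same termination/halving argument for $k<p_m^M\le 2k$, the same one-step cost-to-vertex-ratio monotonicity derived from Property~\ref{obser331-1} (the paper lower-bounds the discarded half rather than upper-bounding the retained half, but this is the same computation), and the same final assembly via Lemma~\ref{lemma-427-1} and Lemma~\ref{lemma-427-2}. No gaps.
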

\begin{proof}
	By the termination criterion of the algorithm, we have $p'\leq 2k$. In each round of the while loop, the number of vertices spanned by a non-trivial path is reduced by at most a half. Combining this with the fact that at the beginning of the last round before termination, $\mathcal P_m^M$ spans more than $2k$ vertices, we have $p'>k$.
	
	Let $j_l^*$ be the index $j^*$ in line \ref{linej} of Algorithm \ref{algo1} in the $l$-th iteration.
	If $P_l$ is non-trivial, then by Property~\ref{obser331-1},
	\begin{align}\label{eq331-2}
		\frac{w^{(j^*_l)}(P_l)}{q^{(j^*_l)}(P_l)}\geq \frac{w^{(0)}(P_l)+w^{(1)}(P_l)}{q^{(0)}(P_l)+q^{(1)}(P_l)}\geq \frac{w(P_l)}{q(P_l)},
	\end{align}
	where $w^{(0)},w^{(1)},q^{(0)},q^{(1)}$ are defined in the paragraph before Fig. \ref{fig330-1}. It follows that
	\begin{align}\label{eq331-1}
		w(P_{l+1})& = w(P_l)- w^{(j^*_l)}(P_l)\leq w(P_l)-\frac{q^{(j^*_l)}(P_l)}{q(P_l)}w(P_l)\nonumber\\
		&=\frac{q(P_{l+1})}{q(P_l)}w(P_l).
	\end{align}
	Iteratively using inequality \eqref{eq331-1},
	\begin{align}\label{eq331-3}
		w(P_M) \leq\frac{q(P_M)}{q(P_0)}w(P).
	\end{align}
	Note that any path $P\in \mathcal P^{0}$ has $q(P_M)=1$ and cost $w(P_M)=0$. Then by \eqref{eq331-3} and making use of Lemma \ref{lemma-427-1} and Lemma \ref{lemma-427-2},
	\begin{align*}\label{eq427-1}
		&w(\mathcal P_m^M) \\
		=&\sum_{P\in \mathcal P^{>0}}w(P_M)\leq \sum_{P\in \mathcal P^{>0}}\frac{q(P_M)}{q(P_0)}w(P)\\
		\leq &2\cdot\frac{\sum_{P\in \mathcal P^{>0}}q(P_M)}{|V(\mathcal P^{>0})|}w(\mathcal P^{>0})\\
		\leq &2\cdot\frac{\sum_{P\in \mathcal P^{>0}}q(P_M)+\sum_{P\in \mathcal P^{0}}q(P_M)}{|V(\mathcal P^{>0})|+|V(\mathcal P^{0})|}(w(\mathcal P^{>0})+w(\mathcal P^{0}))\\
		= &2\cdot\frac{p_m^M}{p_m} w(\mathcal P_m).
	\end{align*}
	The lemma is proved.
\end{proof}
As a consequence, we have the following result.

\begin{coro}\label{coro0808-1}
If $\mathcal A_{PC}$ computes a path set $\mathcal P_m$ spanning more than $2k$ vertices, then we can find a path set $\mathcal P_m^{M}$ spanning $p_m^M\in[k,2k]$ vertices with
$$
w(\mathcal P_m^{M})\leq 16\cdot\frac{opt_{m,k}+\varepsilon}{1-\alpha}.
$$
\end{coro}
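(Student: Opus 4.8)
The plan is to combine Lemma \ref{coro0108-1} with Lemma \ref{lem0108-6} directly. First I would observe that if $\mathcal A_{PC}$ (invoked through Algorithm \ref{algo4}) returns a path set $\mathcal P_m$ spanning more than $2k$ vertices, then $p_m>2k$, so Lemma \ref{coro0108-1} applies and gives
$$
w(\mathcal P_m)\leq \frac{4(p_m-\alpha k)}{(1-\alpha)k}(opt_{m,k}+\varepsilon).
$$
Since $p_m>2k$, the hypothesis of Lemma \ref{lem0108-6} is met, so running Algorithm \ref{algo1} on $\mathcal P_m$ yields $\mathcal P_m^M$ with $k<p_m^M\leq 2k$ and $w(\mathcal P_m^M)\leq 2\cdot\frac{p_m^M}{p_m}\cdot w(\mathcal P_m)$. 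This immediately certifies the vertex-count claim $p_m^M\in[k,2k]$.

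Next I would chain the two weight bounds:
$$
w(\mathcal P_m^M)\leq 2\cdot\frac{p_m^M}{p_m}\cdot\frac{4(p_m-\alpha k)}{(1-\alpha)k}(opt_{m,k}+\varepsilon)
= \frac{8 p_m^M (p_m-\alpha k)}{(1-\alpha)k\, p_m}(opt_{m,k}+\varepsilon).
$$
The routine part is then to bound the scalar factor $\frac{8 p_m^M (p_m-\alpha k)}{(1-\alpha)k\, p_m}$ by $\frac{16}{1-\alpha}$. For this one uses $p_m^M\leq 2k$ and the elementary fact that $\frac{p_m-\alpha k}{p_m}=1-\frac{\alpha k}{p_m}<1$, so the factor is at most $\frac{8\cdot 2k\cdot 1}{(1-\alpha)k}=\frac{16}{1-\alpha}$. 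Substituting gives exactly $w(\mathcal P_m^M)\leq 16\cdot\frac{opt_{m,k}+\varepsilon}{1-\alpha}$, as claimed.

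There is no real obstacle here; the corollary is a bookkeeping consequence of the two preceding lemmas. The only point requiring a little care is making sure the hypotheses line up — namely that the path set produced by $\mathcal A_{PC}$ in the context of Algorithm \ref{algo4} is precisely the $\mathcal P_m=\mathcal P_{m,L_1}$ whose cost bound is given in Lemma \ref{coro0108-1}, and that "spanning more than $2k$ vertices" is the same $p_m>2k$ condition needed to invoke Lemma \ref{lem0108-6}. Once those identifications are noted, the estimate above closes the proof.
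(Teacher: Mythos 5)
Your proposal is correct and follows essentially the same route as the paper: chain the cost bound of Lemma \ref{coro0108-1} with the trimming bound of Lemma \ref{lem0108-6}, then use $p_m^M\leq 2k$ and $(p_m-\alpha k)/p_m<1$ to absorb the factor into $\frac{16}{1-\alpha}$. The paper merely performs the simplification $p_m-\alpha k\leq p_m$ before invoking Lemma \ref{lem0108-6} rather than after, which is an immaterial difference.
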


\begin{proof}
	Algorithm \ref{algo1} guarantees $p_m^M\in[k,2k]$. Lemma \ref{coro0108-1} implies $w(\mathcal{P}_m)\leq \frac{4p_m}{(1-\alpha)k}(opt_{m,k}+\varepsilon)$. The result follows from Lemma \ref{lem0108-6} and $p_m^M\leq 2k$.
\end{proof}

To sum up, by applying the algorithm $\mathcal A_{PC}$, we obtain a set $\mathcal P_m$ of $m$ vertex-disjoint paths. If $\mathcal P_m$ spans $p_m\in [\alpha k,2k]$ vertices, then $\mathcal P_m$ is an $(\alpha,\frac{8-2\alpha}{1-\alpha}+\varepsilon)$-bicriteria approximate solution. Otherwise, $p_m>2k$, then we can trim $\mathcal P_m$ into a set $\mathcal P_m^M$ of $m$ vertex-disjoint paths spanning $p_m^M\in[k,2k]$ vertices, which is a feasible solution with approximation ratio at most $\frac{16}{(1-\alpha)}+\varepsilon$. Therefore, we have the following theorem.

\begin{theorem}\label{theo53-1}
For $k$-MinWP$_m$, there exists a polynomial time algorithm which achieves either a $(\frac{16}{(1-\alpha)}+\varepsilon)$-approximation, or an $(\alpha, \frac{8-2\alpha}{1-\alpha}+\varepsilon)$-bicriteria approximation, where $\alpha\in(0,1)$ is an adjustable parameter.
\end{theorem}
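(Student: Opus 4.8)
The plan is to read off the theorem as a short, adaptive case analysis that simply packages the lemmas already proved in this subsection. First I would run Algorithm~\ref{algo4} on the given $k$-MinWP$_m$ instance; since $opt_{m,k}$ is not known in advance, this binary search is exactly what makes $\mathcal A_{PC}$ usable, and by Lemma~\ref{coro0108-1} it returns in polynomial time a set $\mathcal P_m$ of $m$ vertex-disjoint paths spanning $p_m>\alpha k$ vertices with $w(\mathcal P_m)\le \frac{4(p_m-\alpha k)}{(1-\alpha)k}(opt_{m,k}+\varepsilon)$. The structural fact I would lean on is that $p_m>\alpha k$ always holds (the byproduct of $w(\mathcal P_m)\ge 0$ in Lemma~\ref{lemma-325-1}), so precisely two cases remain: $\alpha k<p_m\le 2k$, and $p_m>2k$.

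In the case $\alpha k<p_m\le 2k$, the set $\mathcal P_m$ already spans at least $\alpha k$ vertices, so it is admissible as a bicriteria solution, and substituting $p_m\le 2k$ into the cost bound of Lemma~\ref{coro0108-1} gives $w(\mathcal P_m)\le \frac{4(2-\alpha)}{1-\alpha}(opt_{m,k}+\varepsilon)\le \frac{8-2\alpha}{1-\alpha}(opt_{m,k}+\varepsilon)$; once the additive slack inside $opt_{m,k}+\varepsilon$ is folded into a multiplicative $+\varepsilon$ term, this is exactly the $(\alpha,\frac{8-2\alpha}{1-\alpha}+\varepsilon)$-bicriteria guarantee. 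In the case $p_m>2k$, the input precondition of Algorithm~\ref{algo1} (``spans more than $2k$ vertices'') is met, so I would trim $\mathcal P_m$ via Algorithm~\ref{algo1}; Corollary~\ref{coro0808-1} --- itself Lemma~\ref{lem0108-6} combined with $p_m^M\le 2k$ --- then yields a path set $\mathcal P_m^M$ spanning $p_m^M\in[k,2k]$ vertices with $w(\mathcal P_m^M)\le 16\cdot\frac{opt_{m,k}+\varepsilon}{1-\alpha}$. Since $p_m^M\ge k$, this $\mathcal P_m^M$ is a genuinely feasible $k$-MinWP$_m$ solution, and the same folding of $\varepsilon$ turns its weight bound into a $\left(\frac{16}{1-\alpha}+\varepsilon\right)$-approximation.

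For the polynomial-time claim I would observe that Algorithm~\ref{algo4} makes $O\!\left(\log\frac{k-m}{\varepsilon}\right)$ calls to $\mathcal A_{PC}$, each of cost $O(|V|^4)$ by Theorem~\ref{theo610-1} and Remark~\ref{rmk0812-1}, while Algorithm~\ref{algo1} halves the vertex count of every non-trivial path per iteration and hence terminates after $O(\log k)$ iterations of $O(|V|)$ elementary work. Putting the two cases together then gives the statement verbatim: the algorithm always outputs either a $\left(\frac{16}{1-\alpha}+\varepsilon\right)$-approximation (when $p_m>2k$) or an $(\alpha,\frac{8-2\alpha}{1-\alpha}+\varepsilon)$-bicriteria approximation (when $\alpha k<p_m\le 2k$).

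The main obstacle is bookkeeping rather than anything conceptual: I must confirm the split on $p_m$ is exhaustive (which rests entirely on the $p_m>\alpha k$ guarantee), that Case~2 delivers exactly the input Algorithm~\ref{algo1} needs, and --- the one point deserving care --- that the additive $\varepsilon$ sitting inside $opt_{m,k}+\varepsilon$ can legitimately be converted into the additive $\varepsilon$ advertised in the ratios. For that last point I would use the standing normalization $w(e)\le 1$ from Remark~\ref{rmk0812-1}, note $opt_{m,k}\ge 1$ for any non-degenerate instance so that $\frac{c}{1-\alpha}(opt_{m,k}+\varepsilon)\le\left(\frac{c}{1-\alpha}+\varepsilon\right)opt_{m,k}$ after renaming $\varepsilon$, dispose of the trivial case $opt_{m,k}=0$ by hand, and then the two cases close immediately from Lemma~\ref{coro0108-1} and Corollary~\ref{coro0808-1}.
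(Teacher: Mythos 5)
Your proposal is correct and follows essentially the same route as the paper: run the binary-search wrapper around $\mathcal A_{PC}$, split on whether $p_m\le 2k$ (giving the bicriteria guarantee via Lemma~\ref{coro0108-1}) or $p_m>2k$ (giving the feasible $\frac{16}{1-\alpha}$ bound via the trimming in Corollary~\ref{coro0808-1}). Your extra care in converting the additive $\varepsilon$ inside $opt_{m,k}+\varepsilon$ into the advertised ratio, and your observation that the first case actually yields the tighter constant $\frac{8-4\alpha}{1-\alpha}\le\frac{8-2\alpha}{1-\alpha}$, are both points the paper glosses over, but they do not change the argument.
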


Regarding the running time, determining an $L$ to approximate $opt_{m,k}$ and the corresponding path set $\mathcal P_m$ takes time $O(T_{PC}\cdot\frac{\log (k-m)}{\varepsilon})$. The remaining operations can be completed in $O(n)$ time. Therefore, according to Theorem \ref{theo610-1}, the total time required to solve $k$-MinWP$_m$ is $O(\frac{1}{\varepsilon}n^4\log n)$. We denote this algorithm as $\mathcal{A}_{P}^{m,k}$ and its running time as $T_{P}^{m,k}$.

\subsection{A $(0.035-O(\varepsilon))$-approximation for MOP}\label{subsec-002}
In this section, we show how to make use of algorithm $\mathcal A_{P}^{m,k}$ for $k$-MinWP$_m$ to obtain an approximate solution for MOP. The algorithm (denoted as $\mathcal{A}_{MO}^{B,m}$) is presented in Algorithm \ref{algo2}. Parameter $k$ in the outer for-loop is used to guess the optimal value $opt_{MO}$ of MOP. For each guess $k$, compute a path set $\mathcal P_{m,k}$ using algorithm $\mathcal A_{P}^{m,k}$. Consider each path $P\in\mathcal P_{m,k}$ as a line segment of length $w(P)$. Divide this line segment into segments with equal lengths (the number of segments depends on the number of vertices spanned by $\mathcal P_{m,k}$, refer to the ``if'' part of Algorithm \ref{algo2}). Note that some of the line segments might not start and end at vertices. Consider those paths contained in these line segments, and select the one, designated as $P'$, which incorporates the largest number of vertices. We refer to $P'$ as the {\em heaviest sub-path} contained in the line segments. The selected paths $\{P'\colon P\in\mathcal P_{m,k}\}$ form the output of Algorithm \ref{algo2}.

\vskip 0.2cm
\begin{algorithm}
\caption {\small A $(0.035-O(\varepsilon))$-approximation algorithm (denoted as $\mathcal{A}_{MO}^{B,m}$) for MOP.}
\hspace*{0.02in}\raggedright{\bf Input:} An edge-weighted graph $G$, a positive integer $m$ and a budget $B$.

\hspace*{0.02in}\raggedright{\bf Output:} A set $\mathcal P^{B}$ of $m$ vertex-disjoint paths.

\begin{algorithmic}[1]
	
	\FOR{$k=1,\ldots,n$}
	\STATE compute a set $\mathcal{P}_{m,k}$ of $m$ vertex-disjoint paths by calling $\mathcal{A}_{P}^{m,k}$
	\IF{$|V(\mathcal{P}_{k,m})|\geq k$}
	\STATE $ls\leftarrow \lceil\frac{16}{1-\alpha}+\varepsilon\rceil$
	\ELSE 
	\STATE $ls\leftarrow \lceil\frac{8-4\alpha}{1-\alpha}+\varepsilon\rceil$
	\ENDIF
	\FOR{each $P\in\mathcal P_{m,k}$}
	\STATE divide the line segment corresponding to $P$ into $ls$ segments with length $w(P)/ls$
	\STATE $P'\leftarrow$ the heaviest sub-path contained in these line segments
	\ENDFOR
	\STATE $\mathcal{P}'_{m,k}\leftarrow\{P'\colon P\in\mathcal P_{m,k}\}$
	\ENDFOR
	\STATE $k^*\leftarrow\arg\max_{k=1,\ldots,n}\{|V(\mathcal P_{m,k}')|\colon w(\mathcal P_{m,k}')\leq B\}$
	\RETURN $\mathcal P^B\leftarrow \mathcal P'_{m,k^*}$
\end{algorithmic}\label{algo2}
\end{algorithm}

\begin{theorem}\label{the624-1}
Algorithm \ref{algo2} provides a $(0.035-O(\varepsilon))$-approximation for MOP and runs in $O(\frac{1}{\varepsilon}n^4\log n)$ time.
\end{theorem}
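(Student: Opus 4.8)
The plan is to show that, among the $n$ guesses of $k$ in the outer loop of Algorithm~\ref{algo2}, the single guess $k=opt_{MO}$ already yields the claimed bound, and that the ``cut each path into $ls$ equal-length pieces and keep the vertex-richest one'' step is precisely what turns the (possibly budget-violating) output of $\mathcal A_{P}^{m,k}$ into a budget-feasible sub-collection that still retains a constant fraction of its vertices. So the argument has three ingredients: identifying the good guess, controlling the effect of the chopping, and balancing the two branches of the if-statement by choosing $\alpha$.

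First I would fix an optimal MOP solution $\mathcal P^{*}$, so $|V(\mathcal P^{*})|=opt_{MO}$ and $w(\mathcal P^{*})\le B$. Since each of the $m$ paths is nonempty, $m\le opt_{MO}\le n$, so $k=opt_{MO}$ is among the guesses; and, regarding $\mathcal P^{*}$ as a feasible solution of the $opt_{MO}$-MinWP$_m$ instance, $opt_{m,opt_{MO}}\le w(\mathcal P^{*})\le B$. Then Theorem~\ref{theo53-1} guarantees that $\mathcal A_{P}^{m,opt_{MO}}$ returns a path set $\mathcal P_{m,k}$ (with $k=opt_{MO}$) that is either (i) feasible for $k$-MinWP$_m$, so $|V(\mathcal P_{m,k})|\ge k$ and $w(\mathcal P_{m,k})\le(\frac{16}{1-\alpha}+\varepsilon)\,opt_{m,k}\le(\frac{16}{1-\alpha}+\varepsilon)B$, or (ii) an $(\alpha,\frac{8-4\alpha}{1-\alpha}+\varepsilon)$-bicriteria solution, so $|V(\mathcal P_{m,k})|\ge\alpha k$ and $w(\mathcal P_{m,k})\le(\frac{8-4\alpha}{1-\alpha}+\varepsilon)B$ --- exactly the two cases separated by the test $|V(\mathcal P_{m,k})|\ge k$ in Algorithm~\ref{algo2}, with $ls$ set to $\lceil\frac{16}{1-\alpha}+\varepsilon\rceil$ and $\lceil\frac{8-4\alpha}{1-\alpha}+\varepsilon\rceil$ respectively.

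Next comes the chopping estimate. Viewing a path $P\in\mathcal P_{m,k}$ as an interval of length $w(P)$ and cutting it into $ls$ consecutive pieces of equal length $w(P)/ls$, each vertex of $P$ falls into one piece, the vertices inside a fixed piece form a subpath of $P$ of weight at most $w(P)/ls$, and by pigeonhole the vertex-richest piece $P'$ has $|V(P')|\ge|V(P)|/ls$. Summing over the $m$ paths (the $P'$'s are vertex-disjoint subpaths of distinct $P$'s, and there are still exactly $m$ of them) gives $w(\mathcal P'_{m,k})\le w(\mathcal P_{m,k})/ls$ and $|V(\mathcal P'_{m,k})|\ge|V(\mathcal P_{m,k})|/ls$. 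In both cases $ls$ is at least the relevant approximation ratio, so $w(\mathcal P'_{m,opt_{MO}})\le opt_{m,opt_{MO}}\le B$ --- hence $\mathcal P'_{m,opt_{MO}}$ is a legitimate candidate in the final $\arg\max$ --- while $|V(\mathcal P'_{m,opt_{MO}})|$ is at least $opt_{MO}/\lceil\frac{16}{1-\alpha}+\varepsilon\rceil$ or $\alpha\,opt_{MO}/\lceil\frac{8-4\alpha}{1-\alpha}+\varepsilon\rceil$. Since the algorithm outputs the vertex-richest feasible $\mathcal P'_{m,k}$ over all $k$, the returned $\mathcal P^{B}$ is a feasible MOP solution with $|V(\mathcal P^{B})|\ge\min\{\,1/\lceil\frac{16}{1-\alpha}+\varepsilon\rceil,\ \alpha/\lceil\frac{8-4\alpha}{1-\alpha}+\varepsilon\rceil\,\}\cdot opt_{MO}$, and a choice of $\alpha$ around $0.4$ (balancing the two fractions, and letting $\varepsilon\to0$) makes this constant at least $0.035-O(\varepsilon)$. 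For the running time, the outer loop makes $n$ calls to $\mathcal A_{P}^{m,k}$ plus $O(n)$ extra work per call (forming the $ls$ pieces of each of the $m$ paths and picking the best), so the claimed time bound follows from Theorem~\ref{theo610-1} and the running-time discussion following Theorem~\ref{theo53-1}.

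I expect the main obstacle to be this chopping step and its bookkeeping. The subroutine $\mathcal A_{P}^{m,k}$ controls only the \emph{cost} of its output, up to a constant factor, so that output may overshoot the single shared budget $B$, and there is no obvious way to split $B$ among the $m$ paths in advance; the uniform cut into $ls\approx(\text{that constant})$ equal pieces is what lets one trade the excess cost for a proportional, controllable loss of vertices, so that the \emph{same} $ls$ simultaneously restores $w(\mathcal P'_{m,k})\le B$ and keeps $\Omega(opt_{MO})$ vertices. Extracting a clean numerical constant from this --- rather than one that degrades because of the ceilings or the additive $\varepsilon$ --- is the delicate point, and it is where the value of $\alpha$ must be pinned down.
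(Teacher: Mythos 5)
Your proposal is correct and follows essentially the same route as the paper's proof: focus on the guess $k=opt_{MO}$, invoke Theorem~\ref{theo53-1} to get the two cases matched by the if-test, chop each path into $ls$ equal-length pieces keeping the vertex-richest one (pigeonhole for the vertex count, and $ls$ at least the cost blow-up to restore feasibility), and balance the two branches over $\alpha$; the paper pins down $\alpha^*=11-4\sqrt{7}\approx 0.417$ (consistent with your ``around $0.4$'') by bounding each ceiling $\lceil x\rceil$ by $x+1$, which is the only computation you leave implicit. One minor remark: $n$ calls to $\mathcal{A}_{P}^{m,k}$ at $O(\frac{1}{\varepsilon}n^4\log n)$ each gives $O(\frac{1}{\varepsilon}n^5\log n)$, which is what the paper's own proof concludes in contradiction with its theorem statement, so the running-time discrepancy you inherit is the paper's, not yours.
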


\begin{proof}
	Consider the case when $k=opt_{MO}$ in Algorithm \ref{algo2}, i.e, the guessed value matches the optimal value, by Theorem \ref{theo53-1},  $\mathcal{A}_{P}^{m,opt_{MO}}$ either computes $m$ disjoint paths spanning at least $opt_{MO}$ vertices with cost at most $(\frac{16}{1-\alpha}+\varepsilon)B$, or computes $m$ disjoint paths spanning at least $\alpha\cdot opt_{MO}$ vertices with cost at most $(\frac{8-4\alpha}{1-\alpha}+\varepsilon)B$.
	If the former case occurs, then $\mathcal{P}'_{m,opt_{MO}}$ contains at least $$\frac{opt_{MO}}{\lceil\frac{16}{1-\alpha}+\varepsilon\rceil}\geq
	\frac{opt_{MO}}{\frac{17-\alpha}{1-\alpha}+\varepsilon}=
	\left(\frac{1-\alpha}{17-\alpha}-O(\varepsilon)\right)\cdot opt_{MO}$$
	vertices and
	$$
	w(\mathcal{P}'_{m,opt_{MO}})\leq \frac{w(\mathcal P_{m,opt_{MO}})}{\lceil\frac{16}{1-\alpha}+\varepsilon\rceil}\leq B.
	$$
	If the latter case occurs, then $\mathcal{P}'_{m,opt_{MO}}$ contains at least $$\frac{\alpha\cdot opt_{MO}}{\lceil\frac{8-4\alpha}{1-\alpha}+\varepsilon\rceil}\geq\frac{\alpha\cdot opt_{MO}}{\frac{9-5\alpha}{1-\alpha}+\varepsilon}=
	\left(\frac{\alpha(1-\alpha)}{9-5\alpha}-O(\varepsilon)\right)\cdot opt_{MO}$$
	vertices and
	$$
	w(\mathcal{P}'_{opt_{MO},m})\leq \frac{w(\mathcal P_{m,opt_{MO}})}{\lceil\frac{8-4\alpha}{1-\alpha}+\varepsilon\rceil}\leq B.
	$$
	Therefore, $\mathcal{P}'_{opt_{MO},m}$ is a feasible solution to MOP with approximation ratio at least
	$$
	h(\alpha)=\min\left\{f(\alpha)=\frac{1-\alpha}{17-\alpha},  \ g(\alpha)=\frac{\alpha(1-\alpha)}{9-5\alpha}\right\}-O(\varepsilon),
	$$
	where $\alpha\in(0,1)$. It can be verified that the above minimum achieves its maximum value at $\alpha^*=11-4\sqrt{7}$, where $f(\alpha^*)=g(\alpha^*)$. Hence,
	$$
	h(\alpha^*)=\frac{4\sqrt{7}-10}{6+4\sqrt{7}}-O(\varepsilon)\approx 0.035-O(\varepsilon).
	$$
	
	Regarding running time, the most time-intensive aspect is calling the $\mathcal A_{P}^{m,k}$ algorithm $n$ times, which requires $O(nT_{P}^{k,m})$ time. The remaining operations can be completed in $O(n)$ time. Combining this with the previous analysis of $T_P^{k,m}$, the total running time is $O(\frac{1}{\varepsilon}n^5\log n)$.
\end{proof}

\section{Approximation algorithm for BSC}
In this section, a $(0.0116-O(\varepsilon))$-approximation algorithm is presented for BSC based on algorithm $\mathcal{A}_{MO}^{B,m}$ for MOP. The algorithm consists of two steps. The first step is {\em vertex grouping}, which computes a set $\mathcal P_N$ of $N$ vertex-disjoint paths by calling Algorithm~\ref{algo2}. The second step is {\em sensor allocation}. Note that for the {\em BSC on a line}, which asks for the maximum number of PoIs on a line $L$ to be sweep-covered by $N$ mobile sensors with the same speed, \cite{Diyan2021} has given an $O(|V(L)|N)$-time algorithm to compute an optimal solution, where $|V(L)|$ is the number of PoIs on $L$ (denote this algorithm as $\mathcal A_{line}^{L,N}$, and denote
the optimal value as $opt_{line}^{L,N}$). So, the key to the allocation step is to determine the number of mobile sensors $N_i$ to be deployed on the $i$-th path $P_i\in\mathcal P_N$. To achieve this objective, we employ a dynamic programming approach as outlined below. Suppose $\mathcal P_N=\{P_1,\ldots,P_N\}$. For $i=1,\ldots,N$, denote by $\mathcal P_i=\{P_1,\ldots,P_i\}$ and let $\mathcal P_0=\emptyset$. For $K_i\in\{0,1,\ldots,N\}$ and $N_i\in\{0,1,\ldots,K_i\}$, let $c(i,K_i,N_i)$ be the maximum number of vertices in $\mathcal P_i$ that can be sweep covered by $K_i$ mobile sensors such that path $P_i$ is allocated exactly $N_i$ mobile sensors.
The following lemma shows how to compute the values $\{c(i,K_i,N_i)\}$.

\begin{lemma}\label{lemma-629-1}
The values $\{c(i,K_i,N_i)\}$ can be computed using the following formula
$$
c(i,K_i,N_i)
=\max\{c(i-1,K_i-N_i,j)\}
+opt_{line}^{P_i,N_i}
$$
where the maximum is taken over all
$$j\in\{0,1,\ldots,\min\{K_i-N_i,|V(P_{i-1})|\}\}.$$
The initial conditions are $c(0,K_i,N_i)=0$ for any $K_i, N_i$ and $c(i,0,0)=0$ for any $i$.
\end{lemma}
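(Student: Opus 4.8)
The plan is to prove the recurrence by the standard dynamic-programming argument: show that the right-hand side is simultaneously an upper and a lower bound for the quantity $c(i,K_i,N_i)$ defined combinatorially. The single structural fact that makes this work is that $P_1,\ldots,P_i$ are \emph{vertex-disjoint}: in any allocation that assigns each of the $K_i$ sensors to one of $P_1,\ldots,P_i$, the sets of vertices sweep-covered on distinct paths are pairwise disjoint, so the total number of sweep-covered vertices of $\mathcal P_i$ equals the sum of the numbers sweep-covered on $P_1,\ldots,P_i$ individually. Moreover, for a fixed number $N_i$ of sensors assigned to $P_i$, regarding $P_i$ as a line $L$ and invoking the definition of $opt_{line}^{P_i,N_i}$, the sensors on $P_i$ can sweep-cover at most $opt_{line}^{P_i,N_i}$ vertices of $P_i$, and this bound is attained by running $\mathcal A_{line}^{P_i,N_i}$ on $P_i$ irrespective of what happens on the other paths.

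First I would prove the ``$\le$'' direction. Consider an allocation of $K_i$ sensors to $P_1,\ldots,P_i$ that realizes $c(i,K_i,N_i)$, with exactly $N_i$ sensors on $P_i$; let $j$ be the number of sensors it places on $P_{i-1}$. A path with $|V(P_{i-1})|$ vertices cannot have more than $|V(P_{i-1})|$ of them sweep-covered, so any sensors on $P_{i-1}$ beyond the first $|V(P_{i-1})|$ are useless and can be discarded without decreasing coverage; hence we may assume $j\le\min\{K_i-N_i,|V(P_{i-1})|\}$. The restriction of this allocation to $P_1,\ldots,P_{i-1}$ uses $K_i-N_i$ sensors with $j$ of them on $P_{i-1}$, so it sweep-covers at most $c(i-1,K_i-N_i,j)$ vertices of $\mathcal P_{i-1}$; combining with the bound $opt_{line}^{P_i,N_i}$ on $P_i$ and the additive decomposition yields $c(i,K_i,N_i)\le\max_j c(i-1,K_i-N_i,j)+opt_{line}^{P_i,N_i}$, where $j$ ranges over the stated set.

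Next, the ``$\ge$'' direction. Let $j^*$ attain the maximum on the right-hand side, and take an allocation of $K_i-N_i$ sensors on $P_1,\ldots,P_{i-1}$ (with $j^*$ of them on $P_{i-1}$) that realizes $c(i-1,K_i-N_i,j^*)$; to it append the allocation of $N_i$ sensors on $P_i$ produced by $\mathcal A_{line}^{P_i,N_i}$. The result uses exactly $K_i$ sensors, exactly $N_i$ of which are on $P_i$, and by disjointness it sweep-covers exactly $c(i-1,K_i-N_i,j^*)+opt_{line}^{P_i,N_i}$ vertices of $\mathcal P_i$; thus $c(i,K_i,N_i)$ is at least this value. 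The two inequalities give the claimed formula, and the base cases are immediate: $c(0,K_i,N_i)=0$ since $\mathcal P_0=\emptyset$ has no vertices, and $c(i,0,0)=0$ since no vertex is visited by zero sensors.

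I expect the only delicate point to be the justification that it suffices to let $j$ range up to $\min\{K_i-N_i,|V(P_{i-1})|\}$ rather than up to $K_i-N_i$ — i.e., that surplus sensors on a path never help — and, at a more basic level, the observation that the whole argument is carried out within the model in which each sensor is committed to a single path, which is exactly what lets $opt_{line}$ be applied path by path and the objective split additively over the vertex-disjoint $P_1,\ldots,P_i$.
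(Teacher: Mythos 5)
Your proof is correct: the paper states this lemma without providing any proof, and your two-inequality dynamic-programming argument (additivity of coverage over the vertex-disjoint paths, optimality of $\mathcal A_{line}^{P_i,N_i}$ on each path in isolation, and the observation that sensors beyond $|V(P_{i-1})|$ on $P_{i-1}$ can be reassigned without loss, which justifies truncating the range of $j$) is exactly the standard argument the recurrence requires. The only point worth flagging is the degenerate case where the surplus sensors on $P_{i-1}$ have no earlier path to be reassigned to (e.g.\ $i-1=1$ and $K_i-N_i>|V(P_1)|$), but this is an edge case in the lemma's statement itself rather than a flaw in your reasoning.
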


Let $\{N_i^*\}_{i=1}^N$ be the optimal assignment of mobile sensors on the path set $\mathcal P_N$. After computing all $\{c(i,K_i,N_i)\}$ values using Lemma \ref{lemma-629-1}, we can determine the optimal assignments $\{N_i^*\}_{i=1}^N$ as follows:
\begin{equation}\label{recusive1}
\displaystyle N_N^* =\arg\max\limits_{N_N\in\{0,1,\ldots,N\}}\ c(N,N,N_N)
\end{equation}
(that is, the optimal number of mobile sensors assigned to the $N$-th path achieves the maximum value among $\{c(N,N,N_N)\}_{N_N=0,1,\ldots,N}$)
and for $i=N-1,N-2,\ldots,1$,
\begin{equation}\label{recusive2}
\displaystyle N_i^* =\arg\max
c(i,N-\sum_{j=i+1}^NN_j^*,N_i),
\end{equation}
where the maximum is taken over all $N_i\in\{0,1,\ldots,$ $N-\sum_{j=i+1}^NN_j^*\}$ (that is, having determined $N_{i+1}^*,\ldots,N_N^*$, the total number of sensors assigned to the first $i$ paths is $N-\sum_{j=i+1}^NN_j^*$, and assigning $N_i^*$ sensors to the $i$th path will achieve the maximum value among $\{c(i,N-\sum_{j=i+1}^NN_j^*,N_i)\}_{N_i\in\{0,1,\ldots,N-\sum_{j=i+1}^NN_j^*\}}$). The running time of determining $\{N_i^*\}_{i=1}^N$ is at most $O(N^4)$.

The complete algorithm is detailed  in Algorithm \ref{algo3}.
\begin{algorithm}
\caption {a constant-approximation algorithm for BSC}
\hspace*{0.02in}\raggedright{\bf Input:} A metric graph $G$ and a positive integer $N$.

\hspace*{0.02in}\raggedright{\bf Output:} A schedule of routes for $N$ mobile sensors.

\begin{algorithmic}[1]
	\STATE Compute $N$ vertex-disjoint paths $\mathcal{P}_N$ by algorithm $\mathcal{A}_{MO}^{B,m}$ with $B=Nat, m=N$;\label{line0726-1}
	\STATE determine $\{N_i^*\}_{i=1}^N$ using recursive formula \eqref{recusive1} \eqref{recusive2}; \label{line0726-2}
	\FOR {each $P\in\mathcal{P}_N$}
	\STATE compute an optimal route using algorithm $\mathcal A_{line}^{P_i,N_i^*}$;
	\ENDFOR
	\RETURN the union of the above routes.
\end{algorithmic}\label{algo3}
\end{algorithm}

\begin{theorem}\label{lemma-629-2}
Algorithm \ref{algo3} is a $(0.0116-O(\varepsilon))$-approximation for BSC and executes in time $O(\frac{1}{\varepsilon}n^4\log n)$, where $n$ is the number of vertices.
\end{theorem}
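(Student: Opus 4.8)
The plan is to combine the constant-approximation for MOP (Theorem \ref{the624-1}) with the exact dynamic program of Lemma \ref{lemma-629-1} and a structural comparison between the BSC optimum and the paths produced in line \ref{line0726-1}. First I would recall the key observation connecting BSC with paths: if the optimal BSC solution sweep-covers a vertex $v$ with sweep-period $t$ using sensors of speed $a$, then the portion of that optimal schedule servicing the covered vertices can be decomposed so that the union of trajectories has total length at most a bounded multiple of $Nat$ (each sensor, within one period $t$, traverses a route of length at most $at$, and closing it up costs another factor of $2$). Concretely I would argue that the set $V^{*}$ of PoIs sweep-covered by the BSC optimum can be spanned by $N$ vertex-disjoint \emph{paths} of total length at most $2Nat$ — short-cutting the $N$ cyclic patrol routes of the optimal solution and deleting one edge from each cycle. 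Hence, with the budget set to $B=Nat$ in Algorithm \ref{algo2}, there is a feasible MOP solution with parameter roughly $2B$ spanning $|V^{*}| = opt_{BSC}$ vertices; by Theorem \ref{the624-1} (applied with this inflated budget, which costs at most a constant factor that I fold into the $0.0116$), algorithm $\mathcal A_{MO}^{B,N}$ returns $N$ vertex-disjoint paths $\mathcal P_N$, each of length $\le B/N=at$, spanning at least $(0.035-O(\varepsilon))\cdot opt_{BSC}$ vertices in total.

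The second step is to show that the sensor-allocation phase loses only a constant factor. Because each path $P_i\in\mathcal P_N$ has length at most $at$, a single mobile sensor of speed $a$ can traverse $P_i$ back and forth within time $2\cdot at/a = 2t$, so allocating one sensor per path sweep-covers \emph{every} vertex of $P_i$ with sweep-period $2t$; more generally, $\mathcal A_{line}^{P_i,N_i}$ computes the exact optimum $opt_{line}^{P_i,N_i}$ for the line sub-instance. I would then note that assigning exactly one sensor to each of the $N$ paths is a feasible allocation (it uses $N$ sensors total), and under it the number of sweep-covered PoIs is $\sum_{i=1}^{N}|V(P_i)| = |V(\mathcal P_N)| \ge (0.035-O(\varepsilon))\cdot opt_{BSC}$ — provided the sweep-period is taken to be $2t$ rather than $t$. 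To remove the factor-$2$ slack in the period, I would instead build the MOP instance with $B=Nat/2$ from the start (equivalently, re-scale), so that each returned path has length $\le at/2$, is patrolled within period $t$ by one sensor, and the MOP guarantee applies against an optimum that is still at least $opt_{BSC}$ (the short-cutting argument already absorbs the factor $2$). Since the dynamic program of Lemma \ref{lemma-629-1} optimizes the allocation, its output is at least as good as this uniform one-per-path allocation, giving $ALG \ge (0.035-O(\varepsilon))\cdot opt_{BSC}$ — and after accounting for the additional constant losses in the short-cutting / budget-inflation bookkeeping, the ratio degrades to $0.0116-O(\varepsilon)$. Tracking these constants carefully (roughly $0.035/3 \approx 0.0116$, the $3$ coming from the combined path-decomposition and budget-rescaling overhead) is exactly the bookkeeping I would do here.

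For the running time, the dominant cost is line \ref{line0726-1}: one invocation of $\mathcal A_{MO}^{B,N}$, which by Theorem \ref{the624-1} runs in $O(\frac{1}{\varepsilon}n^{4}\log n)$ time. The allocation step costs $O(N^{4})$ for the recursion \eqref{recusive1}--\eqref{recusive2} plus $N$ calls to $\mathcal A_{line}^{P_i,N_i}$, each $O(|V(P_i)|N)$, for a total of $O(N^{2}n)=O(n^{3})$, which is dominated. Hence the overall running time is $O(\frac{1}{\varepsilon}n^{4}\log n)$, as claimed. (I note a minor discrepancy: Theorem \ref{the624-1}'s proof states $O(\frac{1}{\varepsilon}n^{5}\log n)$ for $\mathcal A_{MO}^{B,m}$ because it loops over $k$; I would either cite the $n^{4}\log n$ bound that the theorem \emph{statement} gives, or absorb the extra factor — the stated theorem here commits to $n^{4}\log n$, so I follow the statement.)

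\medskip

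\noindent\textbf{Main obstacle.} The crux is the first step: proving that the PoIs sweep-covered by an \emph{arbitrary} optimal BSC schedule can be captured by $N$ vertex-disjoint paths of total length $O(Nat)$. The subtlety is that in a general BSC solution the $N$ sensors may have wildly different, aperiodic trajectories, may share or cross routes, and a single PoI may be visited by different sensors at different times; one must argue that within one window of length $t$ each sensor's effective covering route has length $\le at$, that these $N$ routes together touch all covered PoIs, and that they can be disentangled into vertex-disjoint paths without inflating the length by more than a constant (this is where the triangle inequality / short-cutting and a charging argument assigning each covered PoI to one sensor-route come in). Getting the constant in this decomposition right is what ultimately fixes the $0.0116$ figure, so this is the step I would spend the most care on.
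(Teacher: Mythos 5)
Your high-level architecture matches the paper's (decompose the BSC optimum into an MOP-feasible path set, invoke Theorem \ref{the624-1}, then show the allocation step loses only a constant factor), and you land on the right final constant. But the central step of your allocation argument contains a genuine error: you assert that $\mathcal A_{MO}^{B,N}$ returns $N$ paths \emph{each} of length at most $B/N=at$. The MOP budget is a single global constraint $w(\mathcal P_N)=\sum_i w(P_i)\le B$; nothing prevents one path from having length close to $Nat$ while the others are trivial. Consequently your ``one sensor per path, patrol it within period $2t$ (or $t$ after rescaling $B$)'' scheme fails: a single sensor cannot sweep-cover a path of length $\gg at$. The paper's actual argument is different and does not need any per-path bound: it cuts the paths of $\mathcal P_N$ into \emph{blocks} of length $at/2$ (each patrollable by one sensor within period $t$), observes that the number of blocks is at most $\sum_i\bigl(\tfrac{2w(P_i)}{at}+1\bigr)\le \tfrac{2w(\mathcal P_N)}{at}+N\le 3N$ using $w(\mathcal P_N)\le Nat$, and then keeps only the $N$ richest blocks, retaining at least $|V(\mathcal P_N)|/3$ of the vertices. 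That averaging-over-blocks step is the missing idea; it is where the factor $3$ in $0.035/3\approx 0.0116$ actually comes from, not from ``path-decomposition and budget-rescaling overhead.''

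A second, smaller issue: your decomposition of the BSC optimum is unnecessarily lossy and slightly misframed. There is no need to reason about cyclic patrol routes or to pay a factor of $2$ for ``closing up'' cycles. By the definition of sweep coverage, every covered PoI is visited at least once during the window $[0,t]$, and in that window each of the $N$ sensors traverses a walk of length exactly $at$. Short-cutting these $N$ walks one by one (skipping repeated vertices and vertices already claimed by earlier walks; the triangle inequality only decreases length) yields $N$ vertex-disjoint paths of total length at most $Nat$ spanning all $opt_{BSC}$ covered vertices. So the MOP instance with budget exactly $B=Nat$ already has optimum at least $opt_{BSC}$, with no budget inflation to absorb. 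Your ``main obstacle'' paragraph correctly identifies this decomposition as the crux but overcomplicates it; the clean window-of-length-$t$ argument above is all that is needed, and with it your factor-$2$ and rescaling bookkeeping disappears. Your running-time discussion is fine, and you are right to flag the paper's internal $n^4$ versus $n^5$ discrepancy in Theorem \ref{the624-1}.
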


\begin{proof}
	Let $opt_{BSC}$ be the optimal value of the BSC instance. During time interval $[0,t]$, each mobile sensor $s_i$ in an optimal solution travels a walk $W_i^*$ of length $at$. We can trim these walks into a set $\mathcal P_N^*$ of at most $N$ vertex-disjoint paths as follows. Walk along $W_1^*$ and short-cut repeated vertices to get a path $P_1^*$. Next, walk along $W_2$ to obtain a path $P_2^*$ by short-cutting both repeated vertices of $W_2^*$ and those vertices on $P_1^*$. Proceeding in this way until all walks are processed. Note that $\mathcal P_N^*$ spans all those vertices sweep-covered by the optimal solution and thus $|V(\mathcal P_N^*)|=opt_{BSC}$. Furthermore, $w(\mathcal P_N^*)\leq \sum_{i=1}^N w(W_i^*)=Nat$. So, $\mathcal P_N^*$ is a feasible solution to the MOP instance with budget $Nat$ that spans $opt_{BSC}$ vertices. Because the path set $\mathcal{P}_N$ computed by $\mathcal{A}_{MO}^{Nat,N}$ is a $(0.035-O(\varepsilon))$-approximate solution to the same MOP instance, we have
	\begin{equation}\label{eq0724-1}
		w(\mathcal P_N)\leq Nat
	\end{equation}
	and
	\begin{eqnarray}\label{eq0703-1}
		|V(\mathcal{P}_N)|  \geq \left(0.035-O(\varepsilon)\right)|V(\mathcal{P}_N^*)|
		=\left(0.035-O(\varepsilon)\right) opt_{BSC}. \nonumber
	\end{eqnarray}
	
	\textbf{\em Claim.} There exists a sweep coverage scheme for $N$ mobile sensors to sweep-cover at least $|V(\mathcal{P}_N)|/3$ vertices of $\mathcal P_N$.
	
	Because a mobile sensor with speed $a$ can travel a distance of $at$ in a time span $t$, if we let it travel back and forth along a line segment of length $at/2$, then all points on this line segment are sweep-covered within the time span $t$. Consider a line segment of length $at/2$ as a {\em block}. By placing these blocks side by side and assigning one mobile sensor to sweep-cover each block, we call this scheme the {\em separate working mode}. Using separate working mode, all vertices on a path $P$ of length $w(P)$ can be sweep-covered by $\lceil w(P_i)/{\frac{at}{2}}\rceil\leq \frac{2w(P_i)}{at}+1$ mobile sensors, and thus the number of mobile sensors needed to sweep-cover all vertices of $\mathcal P_N$ is at most
	\begin{align}\label{eq0703-2}
		\sum_{P_i\in \mathcal{P}_N}\left(\frac{2w(P_i)}{at}+1\right)  &\leq
		\frac{2\sum_{P_i\in \mathcal{P}_N}w(P_i)}{at}+N\\
		&=\frac{2w(\mathcal{P}_N)}{at}+N\leq 3N,\nonumber
	\end{align}
	where the last inequality makes use of \eqref{eq0724-1}. In other words, at most $3N$ blocks are sufficient to cover all these paths. Choosing $N$ {\em richest} blocks from them, where the richness of a block is measured by the number of vertices contained in it, we can effectively sweep-cover no less than $|V(\mathcal{P})|/3$ vertices using separate mode. The claim is proved.
	
	Combining the claim with inequality \eqref{eq0703-1}, there is a sweep coverage scheme of $N$ mobile sensors sweep-covering at least $\frac13\left(0.035-O(\varepsilon)\right) opt_{BSC}\geq\left(0.0116-O(\varepsilon)\right)$ $opt_{BSC}$ vertices.
	As for the running time, calling the algorithm for MOP (line \ref{line0726-1}) takes time $O(\frac{1}{\varepsilon}n^4\log n)$ (see Theorem \ref{the624-1}); determining $\{N_i^*\}_{i=1}^N$ (line \ref{line0726-2}) takes time $O(N^4)=O(n^4)$ (see the comment below Lemma \ref{lemma-629-1}); the for loop takes time $O(nN)=O(n^2)$.  Therefore, Algorithm~\ref{algo3} executes in time $\frac{1}{\varepsilon}n^4\log n$ and the theorem is proved.
\end{proof}

\section{Conclusion and Future Work}
This paper introduces the first constant approximation algorithm for BSC, with an approximation ratio of $(0.0116-O(\varepsilon))$. We present a novel approach to assign sensors for multiple paths. We use dynamic programming in computing an optimal sensor allocation. Although, there are complexities in computing an optimum sweep-route, surprisingly, an effective approximation of an optimum can be obtained by simply allowing the mobile sensors to act independently. One interesting direction that deserves further study is the optimal allocation of sensors especially under different speeds and sweep periods.



\ifCLASSOPTIONcaptionsoff
\newpage
\fi

\bibliographystyle{IEEEtran}
\bibliography{BSC-ref}



%

%

\begin{IEEEbiography}[{\includegraphics[width=1in,height=1.25in,clip,keepaspectratio]{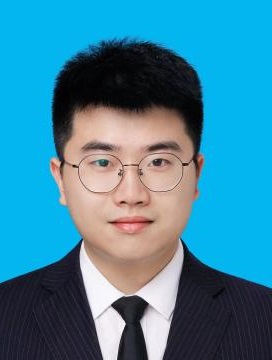}}]{Wei Liang}
received his bachelor degree from Jiangxi Science and Technology Normal University in 2019. He entered Zhejiang Normal University for his master degree the same year. He is now a doctoral student at Zhejiang Normal University. His main interest is about approximation algorithms for coverage problems in networks.
\end{IEEEbiography}
\vskip -1cm
\begin{IEEEbiography}[{\includegraphics[width=1in,height=1.25in,clip,keepaspectratio]{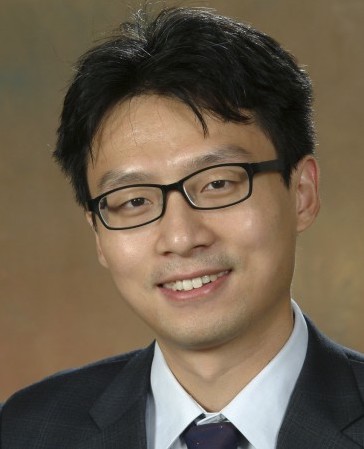}}]{Shaojie Tang}
is currently an assistant professor of Naveen Jindal School of Management at University of Texas at Dallas. He received the PhD degree in computer science from Illinois Institute of Technology, in 2012. His research interests include social networks, mobile commerce, game theory, e-business, and optimization.
He received the Best Paper Awards in ACM MobiHoc 2014 and IEEE MASS 2013. He also received the ACM SIGMobile service award in 2014. Dr. Tang served in various positions (as chairs and TPC members) at numerous conferences, including ACM MobiHoc and IEEE ICNP. He is an editor for International Journal of Distributed Sensor Networks.
\end{IEEEbiography}

\vskip -1cm
\begin{IEEEbiography}[{\includegraphics[width=1in,height=1.25in,clip,keepaspectratio]{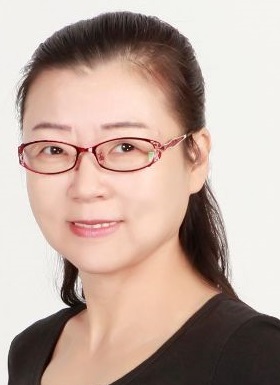}}]{Zhao Zhang}
received her PhD degree from Xinjiang University in 2003. She worked in Xinjiang University from 1999 to 2014, and now is a professor in School of Mathematical Sciences, Zhejiang Normal University. Her main interest is in combinatorial optimization, especially approximation algorithms for NP-hard
problems which have their background in networks.
\end{IEEEbiography}





\end{document}